\documentclass[11pt]{article}

\usepackage[margin=1in]{geometry}
\usepackage{amsmath,amssymb,amsthm,amsxtra}
\usepackage{mathrsfs, mathtools}
\usepackage{bbm}
\usepackage{bm}
\usepackage[shortlabels]{enumitem}
\usepackage{microtype}
\usepackage[T1]{fontenc}
\usepackage{xcolor}
\usepackage{comment}
\usepackage{algorithm}
\usepackage{algpseudocode}
\usepackage{float}
\usepackage{hyperref}
\usepackage{graphicx}
\hypersetup{hidelinks}

\numberwithin{equation}{section}

\newtheorem{theorem}{Theorem}[section]
\newtheorem{definition}[theorem]{Definition}
\newtheorem{lemma}[theorem]{Lemma}

\newtheorem{claim}[theorem]{Claim}

\newtheorem{example}[theorem]{Example}

\newtheorem{observation}[theorem]{Observation}

\newcommand{\per}{\operatorname{per}}

\makeatletter
\DeclareRobustCommand\Equiv{\mathrel{%
  \mathchoice
    {\Equiv@\textfont\displaystyle{.45}}
    {\Equiv@\textfont\textstyle{.45}}
    {\Equiv@\scriptfont\scriptstyle{.5}}
    {\Equiv@\scriptscriptfont\scriptscriptstyle{.55}}
}}
\newcommand{\Equiv@}[3]{%
  \rlap{\raisebox{#3\fontdimen5#12}{$\m@th#2 = $}}%
  \raisebox{-#3\fontdimen5#12}{$\m@th#2 = $}%
}
\makeatother

\newcommand{\bitm}{\begin{itemize}[leftmargin=*]}
\newcommand{\eitm}{\end{itemize}}

\newcommand{\benm}{\begin{enumerate}[leftmargin=*]}
\newcommand{\eenm}{\end{enumerate}}

\definecolor{forestgreen}{rgb}{0.13, 0.55, 0.13}

\title{Structural Corrections to the Bethe Approximation of the Permanent}

\author{
Ijay Narang\thanks{School of Computer Science, Georgia Institute of Technology, \texttt{inarang3@gatech.edu}}
\and
Will Perkins\thanks{School of Computer Science, Georgia Institute of Technology, \texttt{wperkins@gatech.edu}. Supported in part by NSF grant CCF-2309708.}
}

\date{}

\begin{document}
\maketitle

\begin{abstract}
We study deterministic approximation algorithms for the permanent of a nonnegative matrix through the Bethe permanent, an approximation computable in polynomial time. The tight analysis of Anari and Rezaei gives a universal comparison between the permanent and the Bethe permanent within a factor \((\sqrt 2)^n\). The simple example of the unweighted \(4\)-cycle \(C_4\) (or a union of disjoint \(C_4\)'s) shows that this bound is tight. 

We show that such \(4\)-cycle obstructions can be identified and exploited algorithmically. Given a Bethe optimizer, our algorithm identifies nearly isolated weighted \(2\times2\) blocks and peels off a vertex-disjoint family of them. If the total weighted correction is large, we can improve the Bethe approximation; if it is small, we show that the Bethe permanent is within a factor of \((\sqrt2 - \varepsilon)^n\) of the truth. Combining these facts, we obtain a deterministic polynomial time \((\sqrt2-\varepsilon)^n\)-approximation algorithm for the permanent of an arbitrary nonnegative \(n\times n\) matrix, where \(\varepsilon>0\) is some absolute constant.
\end{abstract}

\thispagestyle{empty}

\newpage
\setcounter{page}{1}

\section{Introduction}
Given an $n \times n$ matrix $A$, its permanent is defined as
$\per(A) = \sum_{\sigma\in \mathcal{S}_n}\prod_{i=1}^n A_{i\sigma(i)}.$ When $A$ is a $0/1$ matrix, one can view $A$ as the bi-adjacency matrix of a bipartite graph $G$, and accordingly $\per(A)$ is the number of perfect matchings of $G$. For a general nonnegative matrix, the same expression is the  sum of edge-weighted perfect matchings in the support graph of $A$.

The permanent plays a central role in complexity theory, lying at the heart of questions about tractability and the power of randomness in computation. Computing the permanent of a matrix $A$ exactly is $\#P$-hard \cite{valiant1979complexity}, while computing the determinant of $A$ has a polynomial-time algorithm. Furthermore, even deciding the sign of $\per(A)$ is difficult \cite{aaronson2011computational, grier2016new}, motivating the restricted case of nonnegative matrices.

On the randomized side, a breakthrough approach to approximately sampling perfect matchings via Markov chains was initiated by Broder~\cite{Bro86} and culminated in an FPRAS (Fully Polynomial Randomized Approximation Scheme) for the permanent of nonnegative matrices \cite{jerrum1989approximating,jerrum2004polynomial}. That is, for every \(\varepsilon>0\), there is a randomized algorithm running in time polynomial in \(n\) and \(1/\varepsilon\) which outputs, with  probability at least $2/3$, a \((1+\varepsilon)\)-multiplicative approximation to \(\per(A)\).

Constructing an algorithm without randomness, however, remains elusive. No fully polynomial-time approximation scheme (FPTAS) is known, even for \(0/1\) matrices. Consequently, polynomial-time deterministic algorithms for the permanent are usually studied through their exponential approximation rate. We say that a deterministic algorithm gives a \(c^n\)-approximation to the permanent if, on input \(A\), it outputs a quantity \(f(A)\) satisfying
\begin{equation} \label{eq:one_sided_exp_across}
    f(A) \le \per(A) \le c^n f(A).
\end{equation}

The classical matrix-scaling approach of Linial, Samorodnitsky, and Wigderson gives an \(e^n\)-approximation~\cite{linial1998deterministic};  Gurvits and Samorodnitsky~\cite{GurSam14} then gave a $1.9022^n$-approximation.  The sharpest known deterministic polynomial-time guarantee is the \((\sqrt{2})^n\)-approximation obtained by Anari and Rezaei via  the Bethe permanent \cite{anari2019tightanalysisbetheapproximation}.  Improved guarantees are known in certain special cases, such as constant-degree expanders \cite{gamarnik2007deterministicapproximationalgorithmcomputing}. 

The main contribution of this paper is a polynomial-time deterministic algorithm for approximating the permanent of a nonnegative matrix with a better approximation ratio (by an exponential factor) than the Bethe permanent. Our main result is the following:

\begin{theorem} \label{thm:main}
There exists a constant $\varepsilon>0$ and a deterministic, polynomial-time algorithm that, given a matrix $A \in \mathbb R_{\ge0}^{n \times n}$, produces a $(\sqrt2 - \varepsilon)^n$ multiplicative approximation for $\per(A)$.
\end{theorem}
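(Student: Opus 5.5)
We follow the route sketched in the abstract: compute a Bethe optimizer, locate nearly isolated weighted $2\times 2$ blocks, and split into two cases according to the total weighted correction these blocks carry. Throughout we use the Bethe free energy formulation. For a doubly stochastic $P$ in the support of $A$, let
\[
F(P)=\sum_{i,j}P_{ij}\log(A_{ij}/P_{ij})+\sum_{i,j}(1-P_{ij})\log(1-P_{ij}),
\]
so that $\per_B(A)=\exp\max_P F(P)$. By Gurvits and Anari--Rezaei we have
\[
\per_B(A)\le\per(A)\le(\sqrt2)^n\per_B(A).
\]
The maximizer $P^*$ can be approximated in polynomial time, since the problem is concave in $P$ after Vontobel's reformulation, or equivalently via scaling. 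The lower bound $\per_B\le\per$ holds universally. The task is therefore to beat the upper constant $\sqrt2$ by either certifying a better lower estimate or proving a better upper bound.

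\textbf{Step 1: tightness analysis.} We revisit the Anari--Rezaei proof, which goes through the Schrijver/Gurvits permanent inequality
\[
\per(A)\ge\prod_{i,j}(1-P_{ij})^{1-P_{ij}}\cdot\ldots
\]
and a local entropy comparison. The key step is to show that its per-vertex slack is bounded below unless the row, and the corresponding column, of $P^*$ is close to $(1/2,1/2,0,\dots)$, with the two heavy entries forming a $2\times2$ block whose four entries are all near $1/2$. We state this quantitatively as a stability lemma: for a vertex $v$, let $\delta_v$ be the distance of its neighborhood in $P^*$ from a $C_4$ pattern. Then
\[
\log\per(A)-\log\per_B(A)\le n\log\sqrt2-c\sum_v\min(\delta_v,1).
\]
Consequently, if a constant fraction of the mass $\sum_v\min(\delta_v,1)\ge\eta n$ is present, then $\per_B$ is already a $(\sqrt2-\varepsilon)^n$-approximation and we output $\per_B$.

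\textbf{Step 2: extracting blocks.} Otherwise, all but $O(\eta n)$ vertices lie in nearly isolated $2\times2$ blocks. We greedily choose a vertex-disjoint family $\mathcal B$ of such blocks covering $(1-O(\eta))n$ vertices. Deterministically, we output
\[
f(A)=\prod_{B\in\mathcal B}\per(A_B)\cdot\per_B(A_{\overline{\mathcal B}})\cdot\kappa,
\]
where $A_{\overline{\mathcal B}}$ is the complementary submatrix and $\kappa$ is a correction factor at most $1$ accounting for the weak cross edges. Each $\per(A_B)$ is exact. The lower bound $f\le\per(A)$ follows from superadditivity of the permanent restricted to block-respecting matchings, together with $\per_B\le\per$ on the remainder. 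For the upper bound we would compare against a permanent upper bound, either Bregman/Schrijver style or Anari--Rezaei applied to $A$ with cross-block entries. The cross edges carry total $P^*$-mass $O(\eta n)$, so they contribute at most a factor $e^{O(\eta\log(1/\eta) n)}$. The blocks themselves contribute ratio $1$ instead of $\sqrt2$ per $2\times2$ block, and the leftover vertices contribute at most $\sqrt2$ each. For small $\eta$ this gives an overall ratio $(\sqrt2-\varepsilon)^n$.

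\textbf{Main obstacle.} The hardest part is the stability lemma in Step 1, together with controlling the cross edges in Step 2. The Anari--Rezaei inequality must be made robust, meaning quantitative deviation from $C_4$ structure should yield a linear-in-$n$ gain. Here we would try a term-by-term analysis of the entropy inequality for each row, using strict concavity to obtain a quadratic deficit in the distance to the extremal configuration $(1/2,1/2)$. We would then argue that when rows are near-extremal but the $2\times2$ pattern fails to close up, so that the neighbors don't pair into a block, some column is forced to be non-extremal. Care is needed in balancing $\eta$ against $\varepsilon$, and in proving that the correction for cross edges is genuinely exponentially smaller than the gain $(\sqrt2)^{|\mathcal B|}$.
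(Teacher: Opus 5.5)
Your plan has genuine gaps. The first is that the Step 1 stability lemma is false, because it is phrased in terms of the Bethe optimizer $P^*$. Let $A$ be block diagonal with $n/2$ copies of $\begin{pmatrix}1+\epsilon&1\\1&1\end{pmatrix}$. On a $2\times2$ block the Bethe entropy $\Phi$ vanishes identically on doubly stochastic matrices, so $\beta$ is linear there. Hence $P^*=I$ is the unique optimizer, and every row of $P^*$ is a unit vector, far from $(1/2,1/2,0,\dots)$. Yet $\per(A)/\per_B(A)=(1+\frac{1}{1+\epsilon})^{n/2}$, which contradicts your lemma for small $\epsilon$, and your algorithm would output $\per_B(A)$ on this instance. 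Even for unweighted $C_4$'s, $I$ is a legitimate optimizer.

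The Anari--Rezaei slack analysis lives at the \emph{true} marginals $P$, where $\log\per(A)-\beta(A,P)=H(\mu_A)-\Phi(P)$. It is a Radhakrishnan-style row-exposure argument; the Schrijver--Gurvits inequality gives the lower bound instead. The paper finds the $C_4$ structure in $P$ and transfers to $P^*$ only the \emph{leakage} of each block. Near-extremality forces $\beta(A,P^*)-\beta(A,P)\le\varepsilon n$. Concavity of $\psi$ then caps the total row Jensen gap between $P$ and $P^*$ at $\varepsilon n/2$, and a row that is tight for $P$ but leaks at least $\alpha$ under $P^*$ pays a constant gap. The internal entries of $P^*$ carry no information, so the balance $U_Q\approx V_Q$ must also come from $P$, via $D/X=U_Q/V_Q$.

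Your mechanism for the non-closing case also fails. On an even cycle $C_{2k}$ with $k\ge3$, every row and column of $P$ is exactly $(1/2,1/2,0,\dots)$. So every vertex-local slack is zero and no column is non-extremal, yet $\per(C_{2k})/\per_B(C_{2k})=2<(\sqrt2)^k$. Capturing this requires a component-level entropy bound, as in the paper's block-exposure lemma: expose a set $S$ of rows, bound $H(M\mid X_S)$ by Anari--Rezaei on residual matrices plus concavity, and use that a cycle has only two perfect matchings.

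Step 2 has a second gap: small crossing mass does not make $\per(A)$ nearly factorize across your cut, so the claimed factor $e^{O(\eta\log(1/\eta)n)}$ is unjustified. Take $k$ disjoint $C_4$'s, a left vertex $u$ adjacent only to one column of each block, and a right vertex $w$ adjacent only to one row of each block, with $A_{uw}=0$. Any doubly stochastic point has total leakage exactly $2$, so $2/k$ per block at the symmetric point, and $\per(A)=k2^{k-1}$. But selecting all $k$ blocks leaves the complement $\{u\}\times\{w\}$ with no perfect matching, so $f(A)=0$ and no $\kappa$ can repair it.

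The paper never cuts. It keeps every edge, deletes one edge of the lighter orientation in each selected block, and divides each selected block's left boundary edges by $s_C$ to form $H$. It then outputs $(\prod_C s_C)\per_B(H)$. The lower bound is a flip-orbit injection. The upper bound needs only $\log\per_B(H)\ge\log\per_B(A)-O(\lambda\log(1/\lambda))n$, combined with $\per(A)\le(\sqrt2)^n\per_B(A)$. That inequality is proved with an explicit feasible point: an alternating shift inside each block that does not decrease energy and costs $O(\lambda\log(1/\lambda))$ entropy. The gain is exactly $e^{\Gamma_\lambda(A,P^*)}$, and no factorization of $\per(A)$ is ever required.
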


Our algorithm and proof are based on a careful post-processing and analysis of the worst-case behaviour of the Bethe permanent. Therefore, before we discuss our techniques and outline our proofs, we review the Bethe permanent.

\subsection{The Bethe Permanent}

Given a nonnegative matrix $A$ with $\per(A)>0$, define the probability measure $\mu_A$ on perfect matchings by
\begin{equation}
    \mu_A(\sigma) \propto \prod_{i=1}^n A_{i\sigma(i)} .
\end{equation}
The normalizing constant of this probability distribution is exactly $\per(A)$. The distribution  $\mu_A$ induces marginals on each edge: the probability that edge is included in the perfect matching drawn from $\mu_A$. We collect these marginals in the matrix $P \in \mathbb{R}^{n \times n}$, where
\begin{equation}
    P_{ij}=\Pr_{\mu_A}\{\sigma(i)=j\}.
\end{equation}
We refer to the above matrix as the matrix of \emph{true marginals}. Since every perfect matching uses exactly one edge incident to each vertex, the sum marginals of edges incident to any given vertex is $1$. Therefore $P$ lies in the support-restricted perfect matching polytope
\begin{equation}
\mathcal P(A)
    :=
    \left\{
        X\in B_n:
        X_{ij}=0\ \text{ whenever }\ A_{ij}=0
    \right\}.
\end{equation}
Here $B_n$ denotes the Birkhoff polytope of doubly stochastic matrices.

By the standard self-reducibility equivalence between approximate sampling and approximate counting, sufficiently accurate access to the true marginal matrix $P$ would yield an FPTAS for \(\per(A)\). Moreover, one possible route to computing $P$ is to express it as an optimization problem via the Gibbs variational principle. However, the resulting objective includes the entropy of distributions on perfect matchings with prescribed edge marginals and is computationally intractable in general.

This motivates replacing the exact Gibbs variational problem with the Bethe free energy, a variational approximation originally arising in statistical physics and important in statistics and machine learning as well. For additional information on the general Bethe free energy, see for example, \cite{YFW05,mezard2009information,CY13}. Specialized to the perfect matching model, this relaxation gives the Bethe permanent.

For a nonnegative matrix \(A\in\mathbb R_{\ge0}^{n\times n}\) and a doubly stochastic matrix \(P\in B_n\), define
\[
\beta(A,P)
:=
\sum_{i,j}
\left[
P_{ij}\log\frac{A_{ij}}{P_{ij}}
+
(1-P_{ij})\log(1-P_{ij})
\right],
\]
with the conventions \(0\log0=0\) and \(0\log(0/0)=0\). If \(P_{ij}>0\) while \(A_{ij}=0\), the corresponding summand is interpreted as \(-\infty\). The Bethe permanent of \(A\) is then defined as
\begin{equation}
\per_B(A)
:=
\exp\left(\max_{P\in B_n}\beta(A,P)\right).
\end{equation}
Importantly, the Bethe permanent is computable (to arbitrary accuracy) in polynomial time \cite{vontobel2013bethe}. Additionally, we will write
\[
    \phi(t)=-t\log t+(1-t)\log(1-t),
    \qquad
    \Phi(P)=\sum_{i,j}\phi(P_{ij}),
    \qquad
    \mathcal{E}_A(P)=\sum_{i,j}P_{ij}\log A_{ij}.
\]
so that
\[
    \beta(A,P)=\mathcal E_A(P)+\Phi(P)
\]
whenever $P$ is supported on the positive entries of $A$. It is also useful to consider the Bethe entropy term on a vertex-by-vertex level. To that end, we define $\psi : \Delta_m \rightarrow \mathbb{R}$ by $\psi(p) = \sum_i \phi(p_{i})$. A useful property of $\psi$ is that it is concave over the probability simplex.

\begin{lemma}[\cite{vontobel2013bethe}]
\label{clm:psi-concave-simplex}
The function \(\psi_d\) is concave on \(\Delta_d\). Consequently, for every \(p,q\in\Delta_d\),
\[
\psi_d\!\left(\frac{p+q}{2}\right)
-\frac12\psi_d(p)
-\frac12\psi_d(q)
\ge 0.
\]
\end{lemma}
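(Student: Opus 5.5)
The plan is to show that the Hessian of \(\psi_d\) is negative semidefinite along every direction tangent to \(\Delta_d\), at every point of the relative interior. Concavity on the closed simplex then follows by continuity of \(\phi\) on \([0,1]\), using the convention \(0\log 0=0\). The midpoint inequality is just the definition of concavity with weights \(1/2,1/2\).

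\textbf{Step 1 (reduce to a quadratic-form inequality).} Differentiate twice:
\[
\phi'(t)=-\log t-\log(1-t)-2,\qquad \phi''(t)=-\frac1t+\frac1{1-t}=\frac{2t-1}{t(1-t)}.
\]
The Hessian of \(\psi_d\) is therefore diagonal with entries \(d_i=\phi''(p_i)\). Concavity on the relative interior is equivalent to
\[
\sum_i d_i v_i^2\le 0\quad\text{for all } v \text{ with } \sum_i v_i=0 .
\]
Note that \(\phi\) alone is not concave, since \(d_i>0\) when \(p_i>1/2\). So the constraint \(\sum_i p_i=1\) must be used.

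\textbf{Step 2 (case split).} Since \(\sum_i p_i=1\), at most one coordinate exceeds \(1/2\).
\begin{itemize}
\item If every \(p_i\le 1/2\), then every \(d_i\le 0\) and the claim is immediate.
\item Otherwise, say \(p_1>1/2\). Then \(p_i<1/2\) for \(i\ge2\), and \(s:=\sum_{i\ge2}p_i=1-p_1<1/2\). Write \(v_1=-\sum_{i\ge2}v_i\).
\end{itemize}
In the second case, Cauchy--Schwarz gives
\[
v_1^2\le\Bigl(\sum_{i\ge2}|d_i|^{-1}\Bigr)\Bigl(\sum_{i\ge2}|d_i|v_i^2\Bigr).
\]
Hence it suffices to prove
\[
d_1\sum_{i\ge 2}\frac{1}{|d_i|}\le 1 .
\]
Define \(g(x)=\dfrac{x(1-x)}{1-2x}\) on \([0,1/2)\). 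Then \(1/|d_i|=g(p_i)\) for \(i\ge2\), and \(1/d_1=\dfrac{p_1(1-p_1)}{2p_1-1}=g(s)\). So the needed inequality becomes
\[
\sum_{i\ge2}g(p_i)\le g\Bigl(\sum_{i\ge2}p_i\Bigr).
\]

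\textbf{Step 3 (superadditivity of \(g\)).} This is the crux. It follows once \(g\) is shown to be convex on \([0,1/2)\) with \(g(0)=0\), because a convex function vanishing at \(0\) is superadditive on its domain. For convexity, substitute \(u=1-2x\in(0,1]\). Then \(x(1-x)=(1-u^2)/4\), so
\[
g=\frac{1}{4u}-\frac{u}{4},
\]
which is convex in \(u>0\). Since \(u\) is an affine function of \(x\), \(g\) is convex in \(x\).

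Points where some \(p_i\) equals \(0\) or \(1/2\) are handled either directly or by the continuity argument above. I expect Step 3, identifying the right comparison function \(g\) and verifying its superadditivity, to be the only nontrivial point. Everything else is routine calculus.
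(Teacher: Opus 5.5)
Your proof is correct and complete. It is not the same route as the paper's, because the paper gives no argument: it imports the lemma from Vontobel, where it appears as part of the convexity of the Bethe free energy of the permanent factor graph. So your proposal supplies a self-contained verification of something the paper treats as a black box.

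The key steps all hold:
\begin{itemize}
\item The second derivative is $\phi''(t)=(2t-1)/(t(1-t))$, and since the coordinates sum to $1$, at most one coordinate can have $\phi''>0$.
\item Cauchy--Schwarz reduces the tangent-space Hessian condition to $d_1\sum_{i\ge2}|d_i|^{-1}\le 1$.
\item The identity $1/d_1=g(1-p_1)$ turns this into superadditivity of $g(x)=x(1-x)/(1-2x)$ on $[0,1/2)$.
\item The substitution $u=1-2x$, giving $g=\frac{1}{4u}-\frac{u}{4}$, shows $g$ is convex with $g(0)=0$, hence superadditive.
\end{itemize}
As a sanity check, for $d=2$ the reduced inequality holds with equality, consistent with $\psi_2\equiv 0$ on $\Delta_2$. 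The case $p_1=1/2$ falls into your first case. The continuity extension from the relative interior to the closed simplex is legitimate, since $\phi$ is continuous on $[0,1]$ under the convention $0\log 0=0$.

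What your route buys is transparency. It shows exactly where the simplex constraint enters: the single coordinate with positive curvature is dominated by the negative curvature of the others. That domination is precisely the superadditivity of $g$.
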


Note that Lemma~\ref{clm:psi-concave-simplex} implies that the Bethe objective is concave on its feasible region. Indeed, \(\mathcal E_A(P)=\sum_{i,j}P_{ij}\log A_{ij}\) is linear in \(P\), while each row \(P_{i,*}\in\Delta_n\), so \(P\mapsto \psi(P_{i,*})\) is concave, giving that \(\Phi\), and therefore \(\beta(A,\cdot)\), is concave on $\mathcal{P}(A)$. Thus maximizing \(\beta\) is equivalent to minimizing the convex function \(-\beta\), so \(\per_B(A)\) is computable to polynomial precision by standard convex optimization methods \cite{vontobel2013bethe,BoydVandenberghe2004}. Because of this efficient computability, much work has gone into understanding the relationship between the permanent and Bethe permanent. In the following theorem, the lower bound is a result of \cite{schrijver1998counting, gurvits2011unleashing} and the upper bound is from \cite{anari2019tightanalysisbetheapproximation}.

\begin{theorem} \label{lem:bethe_perm_known}
For any nonnegative matrix $A \in \mathbb{R}_{\ge0}^{n \times n}$, define $P \in \mathbb{R}_{\ge0}^{n \times n}$ to be the matrix of marginals induced by the measure $\mu_A(\sigma) \propto \prod_{i=1}^n A_{i\sigma(i)}$. Then
    \begin{equation}
        \per_B(A) \leq \per(A) \leq (\sqrt{2})^n\per_B(A; P) \leq (\sqrt{2})^n \per_B(A),
    \end{equation}
where $\per_B(A; P) = \exp(\beta(A, P))$, the Bethe permanent evaluated at $P$.
\end{theorem}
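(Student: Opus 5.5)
The chain has three inequalities, and I will treat them separately. The rightmost one, $\per_B(A;P)\le \per_B(A)$, is immediate: $P$ is doubly stochastic and supported on the positive entries of $A$, so $P\in\mathcal P(A)\subseteq B_n$, and $\per_B(A)$ maximizes $\exp\beta(A,\cdot)$ over $B_n$.

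For the lower bound $\per_B(A)\le\per(A)$, I will follow Gurvits' route, which goes through Schrijver's inequality. Let $Q$ be a maximizer of $\beta(A,\cdot)$, and consider the matrix $B$ with entries $B_{ij}=A_{ij}(1-Q_{ij})$. Schrijver's permanental inequality says that for doubly stochastic $Q$,
\[
\per(Q\circ(1-Q))\ge\prod_{i,j}(1-Q_{ij}).
\]
To use it, I will reduce to the case where $Q$ is proportional to $B$ after row and column scaling. This is exactly what the KKT conditions give at an interior optimum of the concave program. Differentiating $\beta$, the stationarity conditions are
\[
\log A_{ij}-\log Q_{ij}-\log(1-Q_{ij})=u_i+v_j
\qquad\text{on the support},
\]
that is, $Q_{ij}(1-Q_{ij})=A_{ij}e^{-u_i-v_j}$. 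Hence
\[
\per(A)=e^{\sum u_i+\sum v_j}\,\per(Q\circ(1-Q))\ge e^{\sum u_i+\sum v_j}\prod_{i,j}(1-Q_{ij}).
\]
Multiplying the stationarity relation by $Q_{ij}$ and summing, with $\sum_{ij}Q_{ij}u_i=\sum_i u_i$ and similarly for $v$, I expect the right-hand side to equal $\exp\beta(A,Q)$. Boundary cases, where $Q_{ij}\in\{0,1\}$ or zero entries of $A$ occur, I will handle by perturbing $A$ and using continuity of both sides.

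For the middle inequality $\per(A)\le(\sqrt2)^n\per_B(A;P)$ with the true marginals $P$, I will follow Anari–Rezaei. By the Gibbs variational principle, $\log\per(A)=\mathcal E_A(P)+H(\mu_A)$, so it suffices to prove
\[
H(\mu_A)\le \Phi(P)+\tfrac n2\log 2 .
\]
I will sample $\sigma\sim\mu_A$ and reveal its rows in a uniformly random order, applying the chain rule for entropy. As in Radhakrishnan's proof of Brégman, this bounds $H(\mu_A)$ by averaging, over each row $i$ and each random order, the entropy of $\sigma(i)$ given the earlier rows. I then bound that conditional entropy using only the marginal row $P_{i,*}$. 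The target is a per-row inequality: the averaged conditional entropy of row $i$ is at most $\psi(P_{i,*})+\tfrac12\log 2$.

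The crux is the one-row inequality of the form
\[
\mathbb E\bigl[\log(\#\text{available columns})\bigr]\le \sum_j\phi(p_j)+\tfrac12\log2,
\]
which must hold for every distribution consistent with the marginals $p$. Anari and Rezaei obtain it by reducing to an extremal problem over distributions on subsets in which each $j$ survives independently in the random order. I would first try to show that the extremal case is $p$ uniform on two columns, where equality $\log 2=\phi(1/2)\cdot2+\tfrac12\log2$ is attained; this matches the $C_4$ tightness. Establishing this extremal reduction is the main obstacle.
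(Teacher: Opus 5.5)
The paper gives no proof of this statement. It quotes the lower bound from Schrijver and Gurvits and the middle inequality from Anari--Rezaei, so your plan is a reconstruction of those cited arguments.

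The rightmost inequality is fine. So is your Gurvits-type lower bound at an interior optimizer: the stationarity relation $A_{ij}=e^{u_i+v_j}Q_{ij}(1-Q_{ij})$ together with Schrijver's inequality gives $\per(A)\ge e^{\sum_i u_i+\sum_j v_j}\prod_{i,j}(1-Q_{ij})=\exp\beta(A,Q)$.

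Your boundary reduction needs one more observation. Making $A$ strictly positive does not push the optimizer into the interior: for $A=I+\eta J$ with $\eta>0$ small, the optimizer is exactly $I$, in every dimension. What rescues the argument is that for positive $A$ the optimizer is either interior or a permutation matrix. A zero inside the fractional part, or a unit entry coexisting with a fractional part, can be moved along a $4$-cycle with entropy gain of order $t\log(1/t)$. At a permutation matrix $\sigma$ one has $\exp\beta(A,Q)=\prod_i A_{i\sigma(i)}\le\per(A)$ trivially. The limit $\epsilon\to0$ then only needs that $\per_B$ is monotone in $A$ and $\per$ is continuous.

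The genuine gap is in the middle inequality: the one-row inequality you name as the crux is false. When row $i$ is exposed, let $N_i$ be the number of still-available columns in the support of $p=P_{i,*}$. It is the rank of $i$ among the $d$ rows $\sigma^{-1}(j)$, $j\in\operatorname{supp}p$, so it is uniform on $\{1,\dots,d\}$ whatever $p$ is, and $\mathbb E[\log N_i]=\frac1d\log d!$.

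Take $d=3$ and $p=(1-2\epsilon,\epsilon,\epsilon)$. Then $\mathbb E[\log N_i]=\frac13\log 6\approx 0.597$, while $\psi(p)+\frac12\log 2\to\frac12\log 2\approx0.347$ as $\epsilon\to0$. Concretely, for the $3\times 3$ matrix $A=I+\eta J$ with $\eta$ small, your chain bounds $H(\mu_A)$ by $\log 6$. But $\Phi(P)+\frac32\log 2\approx\frac32\log2<\log6$, so the per-row step fails in every row. Radhakrishnan's counting bound throws away exactly the marginal information you need.

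The missing step is to bound the conditional entropy by Gibbs' inequality against $p$ renormalized to the available set $S$:
\[
H(\sigma(i)\mid\text{earlier rows})\le\mathbb E\bigl[\log\bigl(p(S)/p_{\sigma(i)}\bigr)\bigr].
\]
A uniform row order induces, through $\sigma$, a column order that is uniform and independent of $\sigma$. Hence this bound equals $\mathbb E_\pi\sum_k p_k\log\bigl(\sum_{j\ge_\pi k}p_j/p_k\bigr)$, which depends on $p$ alone; no quantification over couplings is needed. The Anari--Rezaei one-row inequality says this quantity is at most $\psi(p)+\frac12\log 2$, which is exactly $\Lambda(p)\ge 0$ in the notation of the proof of Lemma~\ref{lem:true_two_point_quantitative}.

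Also, at $p=(\frac12,\frac12)$ both sides equal $\frac12\log 2$, since $\phi(\frac12)=0$; the left side is not $\log 2$ as you wrote. Proving the one-row inequality, by merging coordinates down to three and analyzing $\Delta_3$, is the real content of the cited upper bound, and your proposal leaves it open.
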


The lower bound was originally proved via an inequality by Schrijver~\cite{schrijver1998counting,gurvits2011unleashing} and admits alternative proofs and generalizations via lifts~\cite{Csi14} and stable polynomials~\cite{AO17}. In contrast, the upper bound proof of \cite{anari2019tightanalysisbetheapproximation} follows an entropy comparison framework similar to that of \cite{Rad97} for Bregman's Inequality \cite{Bre73}.

Additionally, we note that both the upper and lower bounds are tight: the lower bound is tight by the identity matrix, and the upper bound is tight by the following extremal example.

\begin{example} Consider the unweighted cycle $C_4$. Then, its set of Bethe optimal solutions is given by the one-parameter family
\[
\left\{
\begin{pmatrix}
p & 1-p\\
1-p & p
\end{pmatrix}
\;\middle|\;
p\in[0,1]
\right\}.
\]
A direct computation gives that $\frac{\per(G) }{\per_B(G)} = 2$, and thus a set of $n/2$ disjoint $C_4$'s gives an approximation ratio of $(\sqrt{2})^n.$
\end{example}

A natural question is whether a disjoint set of $C_4$'s is the unique worst-case example, and after accounting for edges with marginals equal to $0$ or $1$, it is. However, one can construct many examples with ratio $\frac{\operatorname{per}(A) }{\operatorname{per}_B(A)} = (\sqrt{2} -o(1))^n$ that can potentially look very different.

\begin{figure}[H]
    \centering
    \includegraphics[width=0.45\textwidth]{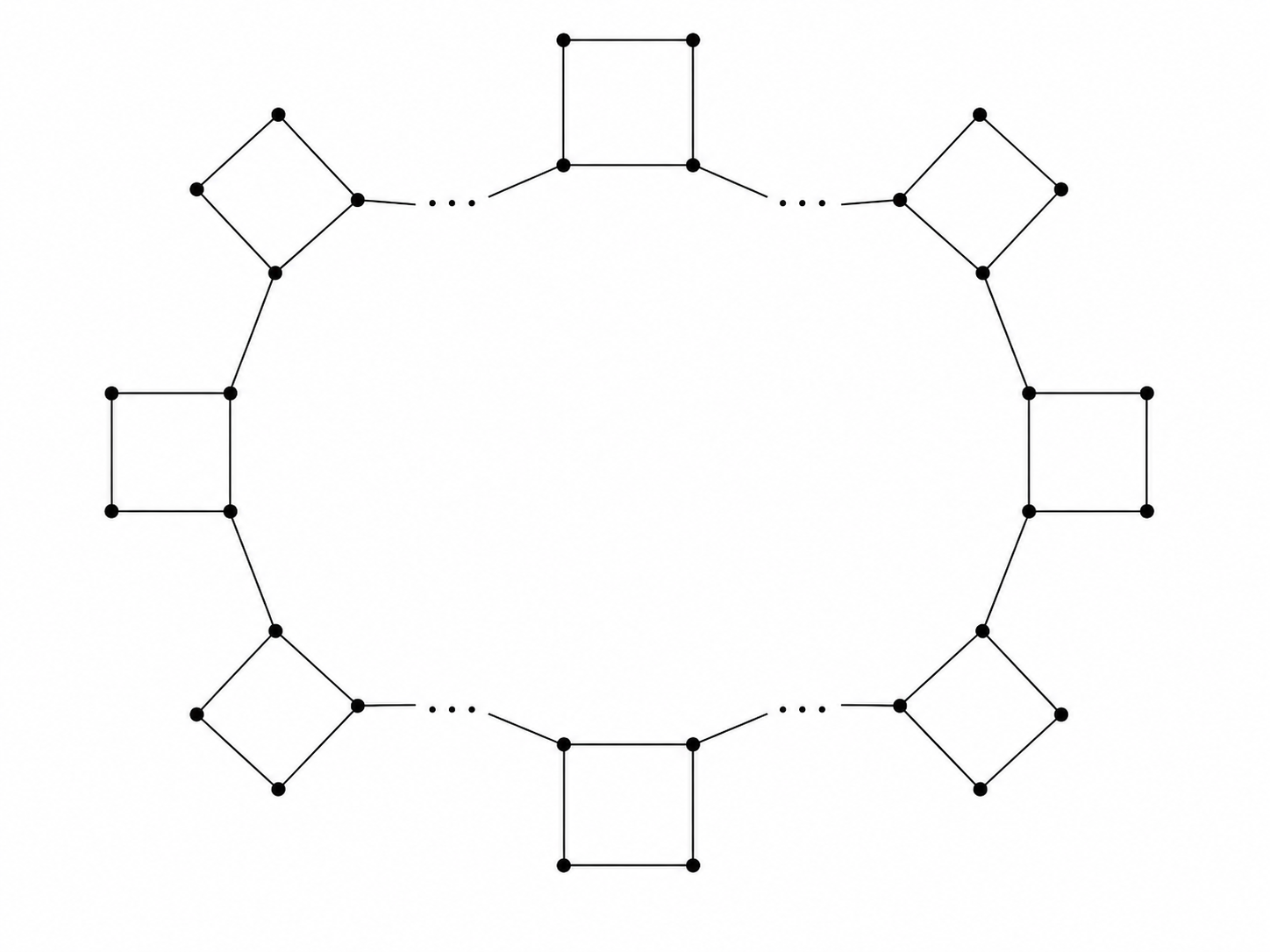}
    \caption{As the outer cycle length tends to infinity, the edges inside each displayed weighted \(2\times2\) gadget have true marginals tending to \(1/2\), while the connector edges between successive gadgets have marginals tending to \(0\). Thus the marginal vector becomes arbitrarily close to that of disconnected \(C_4\)-like components, even though the support graph is connected.}
    \label{fig:glued_c4_example}
\end{figure}

Motivated by such examples, a core idea of our proof is to show that any nonnegative matrix $A$ which attains a near worst-case approximation ratio must behave like a collection of disjoint weighted $2\times2$ blocks whose two local orientation products are nearly balanced. This naturally lends itself to the following dichotomy: either the Bethe permanent performs much better than the worst-case, or it has weighted $C_4$-like behavior. We are able to correct this behavior via a post-processing algorithm that peels off weighted $2\times2$ components.

\subsection{Outline of the Proof}

We begin by observing that we can exactly correct weighted $4$-cycles. Indeed, consider
\[
C=
\begin{pmatrix}
a&b\\ c&d
\end{pmatrix}.
\]
The two local matching orientations have weights $U=ad$ and $V=bc.$ A direct computation gives that $ \operatorname{per}_B(C)=\max\{U,V\}$ and $  \operatorname{per}(C)=U+V$. Thus, the exact correction factor is
\begin{equation} \label{eq:weighted_local_correction_intro}
    s_C
    =
    \frac{\operatorname{per}(C)}{\operatorname{per}_B(C)}
    =
    1+\frac{\min\{U,V\}}{\max\{U,V\}}.
\end{equation}

In a general matrix, however, a \(2\times2\) block need not be isolated. Perfect matchings may enter
or leave the block through boundary edges. We therefore look for blocks which are nearly isolated. This leads to the following definition.

\begin{definition} \label{def:weighted_blocks}
Let $A\in\mathbb R_{\ge0}^{n\times n}$ and let $P\in\mathcal P(A)$. For a block $Q=(\{\ell_1,\ell_2\},\{r_1,r_2\}),$ with all four entries $A_{\ell_i r_j}$ positive, define
\[
U_Q:=A_{\ell_1r_1}A_{\ell_2r_2},
\qquad
V_Q:=A_{\ell_1r_2}A_{\ell_2r_1},
\]
\[
\theta_Q:=\frac{\min\{U_Q,V_Q\}}{\max\{U_Q,V_Q\}},
\qquad
s_Q:=1+\theta_Q,
\qquad
g_Q:=\log s_Q.
\]
The $P$-leakage of $Q$ is
\[
\operatorname{leak}_{P}(Q)
:=
\sum_{i=1}^2\sum_{r\notin\{r_1,r_2\}}P_{\ell_i r}
+
\sum_{j=1}^2\sum_{\ell\notin\{\ell_1,\ell_2\}}P_{\ell r_j}.
\]
We say that $Q$ is $\lambda$-admissible for $(A,P)$ if $\operatorname{leak}_{P}(Q)\le \lambda$.
\end{definition}

Given $A$ and $P$, let $\Gamma_\lambda(A,P)$ denote the total score produced by the greedy algorithm that orders all $\lambda$-admissible blocks by decreasing $g_Q$ and selects a block whenever it is vertex-disjoint from all previously selected blocks. That is,
\begin{equation} \label{eq:greedy_collection}
    \Gamma_\lambda(A,P)
    =
    \sum_{Q\in\mathcal C_{\rm gr}}\log s_Q,
\end{equation}
where $\mathcal C_{\rm gr}$ is the greedy vertex-disjoint collection.

The Bethe approximation performs worst when it handles many high-score weighted $2\times2$ obstructions. We show that when the Bethe optimizer has large total obstruction score, there exists an algorithm that corrects these obstructions.

\begin{theorem} \label{thm:dense-c4}
Fix $\tau>0$. There exist $\lambda_0=\lambda_0(\tau)>0$ and $\varepsilon=\varepsilon(\tau)>0$ such that the following holds for every $0<\lambda\le\lambda_0$.

Let $A\in\mathbb R_{\ge0}^{n\times n}$ satisfy $\operatorname{per}(A)>0$, and let $P^*$ be a Bethe optimizer. If
\[
\Gamma_\lambda(A,P^*)\ge \tau n,
\]
then Algorithm~\ref{alg:dense-c4-correction} outputs a number $L_{\rm out}$ satisfying
\[
L_{\rm out}\le \operatorname{per}(A)
\le (\sqrt2-\varepsilon)^n L_{\rm out}.
\]
\end{theorem}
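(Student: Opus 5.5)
The plan is to peel off the greedy collection \(\mathcal C_{\rm gr}\) and count it exactly, while using Bethe only on the rest of the matrix. Concretely, Algorithm~\ref{alg:dense-c4-correction} would do the following.

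1. Compute \(P^*\) and the greedy family \(\mathcal C_{\rm gr}=\{Q_1,\dots,Q_k\}\).
2. Let \(A'\) be the matrix obtained from \(A\) by zeroing every entry that joins a row (or column) of some \(Q_t\) to a column (or row) outside \(Q_t\). Then \(A'\) is block diagonal: the blocks \(Q_t\) together with the residual matrix \(R\) on the uncovered vertices.
3. Output
\[
L_{\rm out}=c\cdot\prod_{t}\per(Q_t)\cdot \per_B(R),
\]
where \(c\le 1\) is an explicit constant-per-block deflation, only needed if the error terms below require it.

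The lower bound is immediate. Since \(A'\le A\) entrywise, we have \(\per(A)\ge \per(A')=\prod_t \per(Q_t)\per(R)\). Combining this with Theorem~\ref{lem:bethe_perm_known} applied to \(R\) gives \(\per(A)\ge L_{\rm out}\).

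For the upper bound I would compare \(L_{\rm out}\) with \(\per_B(A)\). The target inequality is
\[
\log L_{\rm out}\;\ge\;\log\per_B(A)+\Gamma_\lambda(A,P^*)-\eta(\lambda)\,n,
\]
with \(\eta(\lambda)\to0\) as \(\lambda\to0\). Given this, Theorem~\ref{lem:bethe_perm_known} yields
\[
\per(A)\le 2^{n/2}\per_B(A)\le 2^{n/2}e^{-(\tau-\eta)n}L_{\rm out}.
\]
Choosing \(\lambda_0\) so that \(\eta(\lambda_0)\le\tau/2\) then gives the ratio \((\sqrt2 e^{-\tau/2})^n=(\sqrt2-\varepsilon)^n\).

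To prove the target inequality, I would split \(\beta(A,P^*)\) into three parts: contributions from the rows of each \(Q_t\), from the rows of \(R\), and from boundary edges.

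Restricting \(P^*\) to \(R\) is not doubly stochastic. The leakage bound \(\le\lambda\) per block says it is at most \(O(\lambda)\) away in row and column sums, with total deficiency \(O(\lambda k)\le O(\lambda n)\). I would repair it to some \(\tilde P_R\in\mathcal P(R)\), for instance by Sinkhorn-type rescaling, or by routing the missing mass along support edges. I would then bound \(\beta(R,\tilde P_R)\ge \beta(R,P^*|_R)-\eta_1 n\) using continuity of \(\phi\). Since \(\phi\) is only \(\tfrac12\)-Hölder-like near \(0\) (it behaves like \(t\log(1/t)\)), the loss will be something like \(O(\sqrt\lambda\log(1/\lambda))\) per block.

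Similarly, the Bethe contribution of the rows in each \(Q_t\), including their leaked mass, should be at most \(\log\per_B(Q_t)+\eta_2\). This follows because \(\max_{2\times2}\beta=\log\max\{U,V\}\), together with the same continuity estimate. Since \(\per(Q_t)=s_{Q_t}\per_B(Q_t)\), the gain over the blocks is exactly \(\sum_t g_{Q_t}=\Gamma_\lambda\).

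The main obstacle is the boundary edges. Their energy term \(P_{ij}\log A_{ij}\) is unbounded when the weights \(A_{ij}\) are wildly scaled, so I cannot simply drop it. I would first use the optimality (KKT) conditions of \(P^*\): a Bethe optimizer satisfies \(A_{ij}\propto P_{ij}(1-P_{ij})^{-1}x_iy_j\), and these are exactly the dual variables that let me rescale rows and columns of \(A\). Both \(\per\) and \(\per_B\) are covariant under diagonal scaling, and \(\theta_Q\) is scale-invariant. So I would rescale \(A\) so that \(A_{ij}=P^*_{ij}/(1-P^*_{ij})\). After this, the boundary terms \(P\log(A/P)\) are controlled by \(P\log(1/(1-P))\), which is \(O(\text{leak})\), and the entropy terms \((1-P)\log(1-P)\) are \(O(P)\). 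This bounds the total boundary contribution by \(O(\lambda n)\).

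Summing the three estimates gives the target inequality with \(\eta=O(\sqrt\lambda\log(1/\lambda))\). If the rescaled block's \(\per_B\) differs from the energy/entropy split, I would absorb the discrepancy into the constant \(c\).
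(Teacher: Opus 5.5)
Your replacement algorithm fails, and the failure is in the step you treat as routine: repairing $P^*|_R$ to a point of $\mathcal P(R)$. $\lambda$-admissibility bounds the leakage of each selected block. It does not bound the mass that a single residual vertex sends into the selected blocks. A residual vertex can put all of its $P^*$-mass into many different blocks, each receiving at most $\lambda$, and then cutting the boundary edges leaves $R$ with no perfect matching.

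Concretely, take $k$ disjoint all-ones blocks $Q_t=(\{\ell_1^t,\ell_2^t\},\{r_1^t,r_2^t\})$, plus one row $x$ and one column $y$. Set $A_{xr_1^t}=A_{\ell_1^t y}=1$ for all $t$, and set all other entries zero, including $A_{xy}$. Every perfect matching picks one $t$, uses $x\to r_1^t$, $\ell_1^t\to y$, $\ell_2^t\to r_2^t$, and closes the other blocks, so $\per(A)=k2^{k-1}>0$. Since $\beta(A,\cdot)$ is concave and invariant under permuting the blocks, some Bethe optimizer has $P^*_{xr_1^t}=P^*_{\ell_1^t y}=1/k$. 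Then:
\begin{itemize}
\item each $Q_t$ has leakage $2/k\le\lambda$ once $k\ge 2/\lambda$;
\item the $Q_t$ are the only blocks with four positive entries, so the greedy selects all of them;
\item $\Gamma_\lambda(A,P^*)=k\log 2\ge\tau n$ for every $\tau\le(\log 2)/3$.
\end{itemize}
Your residual $R$ is the $1\times 1$ zero matrix, so your $L_{\rm out}=0<\per(A)$. Taking $A_{xy}=\epsilon>0$ instead keeps the same selected blocks and gives $L_{\rm out}\le 2^k\epsilon$, so $\per(A)/L_{\rm out}$ is unbounded. In particular, the energy cost of the repair inside $R$ (here, forcing unit mass onto an edge of weight $\epsilon$) is unbounded. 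Your KKT rescaling addresses only the boundary edges, so it does not touch this cost. Hence your target inequality $\log L_{\rm out}\ge\log\per_B(A)+\Gamma_\lambda(A,P^*)-\eta(\lambda)n$ is false.

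Two side remarks. First, the theorem concerns the output of Algorithm~\ref{alg:dense-c4-correction}, so changing the algorithm would not prove it as stated even if your version worked. Second, Bethe stationarity reads $A_{ij}\propto P^*_{ij}(1-P^*_{ij})x_iy_j$, not $P^*_{ij}(1-P^*_{ij})^{-1}x_iy_j$.

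The missing idea is to keep the boundary edges and only penalize them, so that no residual problem ever needs repairing. The paper deletes a single edge of the lighter orientation in each selected block (giving $A^\circ$). It then divides the edges leaving the selected left vertices by $s_C$ (giving $H$).

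For the lower bound, perfect matchings of $A^\circ$ index disjoint flip orbits in $A$. A closed block contributes exactly the factor $s_C$ to its orbit, while a non-closed block forces at least one penalized edge. Hence $\bigl(\prod_C s_C\bigr)\per(H)\le\per(A)$.

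For the upper bound, take $P^*$ and apply an alternating shift around each selected $4$-cycle that empties the deleted edge and moves mass toward the heavier orientation. The result is feasible for $A^\circ$, and:
\begin{itemize}
\item the energy does not decrease;
\item the entropy changes by $O(\lambda\log(1/\lambda))$ per block, by Claim~\ref{clm:near_closed_entropy};
\item boundary masses are untouched, so the $1/s_C$ penalty costs at most $\lambda\log 2$ per block.
\end{itemize}
This gives $\log\per_B(H)\ge\log\per_B(A)-O(\lambda\log(1/\lambda))n$ directly. It needs no upper bound on the blocks' share of $\beta(A,P^*)$ and no rescaling of $A$. On the example above, $H$ still has the $k$ perfect matchings through $x$ and $y$, which is exactly what your cut destroys.
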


The underlying algorithmic idea is that when a solution to the Bethe variational problem has many weighted $2\times2$ defects, we can construct an auxiliary weighted matrix which provably has significantly smaller permanent, but similar Bethe value. By analyzing the structural differences between these matrices, we are able to produce a better approximation of the permanent.

To handle the case of small total obstruction score, we show that then the Bethe approximation ratio is already significantly improved. This is given by the following result.

\begin{theorem} \label{thm:sparse-c4}
There exist constants $\lambda>0$, $\tau>0$, and $\varepsilon>0$ such that the following holds.

Let $A\in\mathbb R_{\ge0}^{n\times n}$ satisfy $\operatorname{per}(A)>0$, and let $P^*$ be a Bethe optimizer. If
\[
\Gamma_\lambda(A,P^*)<\tau n,
\]
then
\[
\per_B(A) \leq \operatorname{per}(A)
\le (\sqrt2-\varepsilon)^n\operatorname{per}_B(A).
\]
\end{theorem}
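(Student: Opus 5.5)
We argue by contradiction and compactness-free stability: assume $\operatorname{per}(A)\ge(\sqrt2-\varepsilon)^n\operatorname{per}_B(A)$ and derive $\Gamma_\lambda(A,P^*)\ge\tau n$. The lower bound $\per_B(A)\le\per(A)$ is Theorem~\ref{lem:bethe_perm_known}, so only the upper bound needs work. The starting point is the Anari--Rezaei entropy comparison behind Theorem~\ref{lem:bethe_perm_known}. Let $P$ be the true marginals. Then
\[
\log\per(A)=\mathcal E_A(P)+H(\mu_A),\qquad \log\per_B(A)\ge\beta(A,P)=\mathcal E_A(P)+\Phi(P).
\]
Hence it suffices to show $H(\mu_A)-\Phi(P)\le n\log(\sqrt2-\varepsilon)$ unless the obstruction score is large. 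The Anari--Rezaei proof decomposes $H(\mu_A)$ by revealing the matching vertex by vertex in a random order (Radhakrishnan style). It bounds each vertex's contribution by a local quantity that is at most $\frac12\log 2$ plus that vertex's share of $\Phi(P)$. I would redo this argument keeping track of the slack at each vertex $i$, defining a local deficit $\delta_i\ge0$ with
\[
H(\mu_A)-\Phi(P)\le \tfrac n2\log2-\sum_i\delta_i .
\]

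The first key step is a local stability statement: if $\delta_i$ is small, then row $i$ of $P$ is close to a two-point distribution $(1/2,1/2)$ on two columns. Both columns must in turn have their mass concentrated on the same pair of rows, and this pair of rows must likewise be nearly $(1/2,1/2)$. Here I would use the strict concavity in Lemma~\ref{clm:psi-concave-simplex}: the one-variable extremal problem in the Anari--Rezaei bound is attained only at $p=(1/2,1/2)$. Quantitatively, the deficit should be at least $c\cdot\mathrm{dist}(P_{i,*},\{\text{two-point uniform}\})$. If $\sum_i\delta_i\le \eta n$ with $\eta$ small, Markov's inequality then shows that all but $O(\sqrt\eta)n$ vertices lie in $2\times2$ blocks $Q$ with $\operatorname{leak}_P(Q)\le\lambda$. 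A greedy pairing of these vertices gives about $n/4$ vertex-disjoint such blocks.

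The second step transfers this from the true marginals $P$ to the Bethe optimizer $P^*$. Here the weights must enter, which is also where $\theta_Q$ appears. For a block whose true marginals are near $1/2$ and which is nearly isolated, the ratio of the two orientations in $\mu_A$ must be close to $U_Q/V_Q$ up to a boundary factor controlled by the leakage. So true marginals near $1/2$ force $\theta_Q$ near $1$, i.e. $g_Q\approx\log2$. We still need $P^*$ to see the same blocks as nearly isolated. I would use strong concavity of $\beta(A,\cdot)$ along directions that move boundary mass, together with the sandwich
\[
\beta(A,P)\le\beta(A,P^*)\le\log\per(A)\le\beta(A,P)+\tfrac n2\log2 .
\]
This only gives $\beta(A,P^*)-\beta(A,P)\le \eta n$ in the contradiction regime, since there $\log\per(A)\le\beta(A,P)+\tfrac n2\log 2-\sum\delta_i$. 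By concavity, $P^*$ and $P$ are then close in an averaged $\ell_1$/Hellinger sense on all but $o(n)$ rows.

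With $P^*$ close to $P$ on most blocks, a constant fraction of these blocks are $\lambda$-admissible for $P^*$ and have $g_Q\ge\frac12\log2$. The greedy procedure is a $1/3$- or $1/4$-approximation to the maximum-weight vertex-disjoint family, since each block conflicts with at most boundedly many blocks of the greedy order containing its vertices. Hence $\Gamma_\lambda(A,P^*)\ge c n\ge\tau n$, a contradiction; choosing the constants $\eta\mapsto\varepsilon,\lambda,\tau$ in that order closes the argument.

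The main obstacle I expect is the second step, transferring structure from $P$ to $P^*$. The Bethe objective is not strongly concave on the $C_4$ family itself, as the one-parameter optimizer family shows. So I would bound only leakage-type coordinates (the mass leaving each candidate block), not within-block entries. I would prove strong concavity of $\psi$ in those directions by hand, and first try a direct perturbation argument: move the leaked mass of $P^*$ back into the block and show $\beta$ increases by $\gtrsim$ leakage.
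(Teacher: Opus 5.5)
\textbf{There is a genuine gap in your first step.} The per-vertex deficit from the Anari--Rezaei row-exposure argument is the one-row slack $\Lambda(P_{i,\cdot})$. It depends on row $i$ of $P$ alone. It can force rows (and, via $A^T$, columns) to be near two-point, but it cannot force the two columns of row $i$ to be shared with a second row.

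A concrete counterexample: let $A$ be $n/3$ disjoint copies of $\begin{pmatrix}1&1&0\\0&1&1\\1&0&1\end{pmatrix}$, i.e.\ a union of $6$-cycles. Every row and column of $P$ is exactly $(1/2,1/2)$ on its support, so every $\delta_i=0$. Yet the support contains no $2\times2$ block at all. The theorem itself holds here, since $\per(A)/\per_B(A)=2^{n/3}$. But the slack $\frac n2\log2-(H(\mu_A)-\Phi(P))=\frac n6\log2$ sits in the entropy-versus-cross-entropy step of the row exposure, not in the one-row inequality, so no bound on $\sum_i\delta_i$ detects it. Near-$1/2$ paths behave similarly: row slacks see them only at their two endpoints, giving $O(1)$ deficit per path regardless of length. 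Hence ``$\sum_i\delta_i\le\eta n$ implies most vertices lie in low-leakage $2\times2$ blocks'' is false, and the Markov step has nothing to act on.

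\textbf{The missing idea} is a non-local entropy bound showing that near-$1/2$ edges arranged in anything other than $C_4$'s cost a linear amount of entropy. It must use the full hypothesis that $H(\mu_A)-\Phi(P)$ is near $\frac n2\log 2$, not merely small row slacks. The paper does this in three steps.
\begin{enumerate}
\item The threshold graph $P_\delta$ has maximum degree $2$, so its components are paths and cycles.
\item A block-exposure comparison $H(M)-\Phi(P)\le (H(X_S)-\Phi_S(P))+(n-|S|)\log\sqrt2$. This is proved by applying Theorem~\ref{lem:bethe_perm_known} to the residual matrix after revealing the rows in $S$, then averaging with concavity of $\psi$.
\item A direct count. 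After revealing the few vertices that leave their component, a cycle with at least $3$ left vertices has at most $2$ internal completions and a path at most one. So the $m$ rows in non-$C_4$ components carry entropy about $\frac m3\log2$ rather than $\frac m2\log2$.
\end{enumerate}

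\textbf{Your transfer step} is close to the paper's in spirit, but two details need fixing. First, $\psi$ is not strictly concave: $\psi_2\equiv0$ on $\Delta_2$. Second, perturbing $P^*$ alone cannot increase $\beta$, since $P^*$ is a maximizer; the argument has to compare with $P$. What works is:
\begin{itemize}
\item the midpoint Jensen gap between $P$ and $P^*$, whose total is at most $\frac{\eta}{2}n$;
\item a uniform lower bound $\kappa(\alpha)>0$ on the per-row gap whenever the $P$-row is nearly $(1/2,1/2,0)$ after merging outside coordinates and the $P^*$-row leaks at least $\alpha$, which follows from the infinite slope of $\phi$ at $0$ plus compactness;
\item double stochasticity of $P^*$ to pass from row leakage to column leakage.
\end{itemize}

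\textbf{A smaller point:} the linear stability $\delta_i\ge c\,\mathrm{dist}$ is false. For $p=(1/2+t,1/2-t)$ the slack is $\frac12(\log2-h_2(1/2+t))=\Theta(t^2)$. This is harmless, since any modulus of continuity suffices. The equality case comes from the Anari--Rezaei analysis (merging to three coordinates), not from concavity of $\psi$.
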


Our proof of Theorem~\ref{thm:sparse-c4} is a stability refinement of the
row-exposure framework of Anari and Rezaei~\cite{anari2019tightanalysisbetheapproximation}.
For the true marginal matrix \(P\), the energy term cancels exactly:
\[
    \log\operatorname{per}(A)-\beta(A,P)
    =
    H(\mu_A)-\Phi(P).
\]
Thus the comparison problem becomes purely entropic. The Anari--Rezaei
one-row slack terms are nonnegative and vanish only at two-point extremizers
with masses \((1/2,1/2)\). Thus a near-extremal ratio forces almost
all rows of the true marginal matrix to be close to such two-point vectors.

This row-wise conclusion alone does not imply \(C_4\)-structure: the near-\(1/2\)
edges could form longer paths or cycles. We rule out this possibility through
a block-exposure entropy argument, which shows that non-\(C_4\) components incur
a linear entropy loss. Hence near-extremal true marginals must organize into
\(C_4\)-components. We then show that most of these true-tight components have
small leakage with respect to the Bethe optimizer \(P^*\), and that their two
weighted orientations are nearly balanced. Therefore, if
\(\Gamma_\lambda(A,P^*)\) is small, the Bethe ratio is already bounded away from
\((\sqrt2)^n\).

\paragraph{Independent work}

After finishing and submitting this paper, we learned of independent work of Anari~\cite{anari2026beyond} with a similar main result, but different algorithm.

\paragraph{Organization.}

The remainder of this paper is organized as follows. Section \ref{sec:the_algorithm} contains our algorithm, and Sections \ref{sec:dense-case} and \ref{sec:sparse-case} contain the proofs of Theorems \ref{thm:dense-c4} and \ref{thm:sparse-c4} respectively. These two theorems together immediately imply Theorem \ref{thm:main}.

\paragraph{AI Use Statement}

AI (ChatGPT 5.5 Plus) was used to check the mathematical proofs as well as spelling and grammar. It also suggested ideas that helped simplify Lemma \ref{lem:near_extremal_true_c4_structure_quantitative}.

\section{The Algorithm} \label{sec:the_algorithm}

We now give the deterministic approximation algorithm. The algorithm first checks whether the support of \(A\) contains a perfect matching. If not, then \(\operatorname{per}(A)=0\) and the algorithm returns \(0\). Otherwise, it
computes a Bethe optimizer \(P^*\), identifies a greedy vertex-disjoint family
of low-leakage \(2\times2\) blocks with respect to \(P^*\), and either returns
the uncorrected Bethe permanent or applies the correction subroutine described
below.
The constants \(\lambda\) and \(\tau\) are fixed absolute constants chosen  so that the two guarantees in Theorems~\ref{thm:dense-c4}
and~\ref{thm:sparse-c4} apply. For simplicity we describe the algorithms as using exact Bethe optimizers and exact Bethe
values. Standard convex optimization computes the required quantities to inverse-polynomial
additive accuracy, and the resulting error can be absorbed by decreasing the final constant
\(\varepsilon\).

\begin{algorithm}[H]
\caption{Deterministic weighted Bethe correction algorithm}
\label{alg:bethe-c4-correction}
\small
\begin{algorithmic}[1]
\Require A nonnegative matrix \(A\in\mathbb R_{\ge 0}^{n\times n}\).
\Ensure A lower estimate \(L_{\mathrm{out}}\le \operatorname{per}(A)\).

\State If the support of \(A\) has no perfect matching, return \(0\).

\State  Use the fixed constants \(\lambda,\tau>0\) small enough for Theorems \ref{thm:dense-c4} and \ref{thm:sparse-c4}.

\State Compute
\[
    b^* := \max_{X\in \mathcal P(A)} \beta(A,X),
\]
and let \(P^*\in\arg\max_{X\in\mathcal P(A)}\beta(A,X)\).

\State Compute the greedy vertex-disjoint collection $\mathcal C$ of $\lambda$-admissible blocks from Definition \ref{def:weighted_blocks} and Equation \eqref{eq:greedy_collection}, ordered by decreasing score $g_Q=\log s_Q$.
\State Set $S:=\sum_{Q\in\mathcal C}\log s_Q=\Gamma_\lambda(A,P^*)$.

\If{$S <\tau n$}
    \State \Return $L_{\mathrm{out}}:=\operatorname{per}_B(A)=\exp(b^*)$.
\Else
    \State Run Algorithm \ref{alg:dense-c4-correction} on input $(A,P^*,\mathcal C)$.
    \State \Return its output $L_{\mathrm{out}}$.
\EndIf
\end{algorithmic}
\end{algorithm}

Note that Algorithm~\ref{alg:bethe-c4-correction} runs in polynomial time. It  requires a check for the existence of a perfect matching, a computation of the Bethe permanent, the additional
step of computing \(\Gamma_\lambda(A,P^*)\), and a search over \(O(n^4)\)
possible \(2\times2\) blocks.

Algorithm~\ref{alg:dense-c4-correction} is designed to repair the places where the Bethe approximation loses weighted factors due to local ambiguity. In a weighted $2\times2$ block $C=(\{\ell_1,\ell_2\},\{r_1,r_2\})$, the true matching model has two local orientations: the parallel orientation $\{(\ell_1,r_1),(\ell_2,r_2)\}$ with weight $U_C$, and the cross orientation $\{(\ell_1,r_2),(\ell_2,r_1)\}$ with weight $V_C$. The weighted Bethe correction factor is $s_C=1+\min\{U_C,V_C\}/\max\{U_C,V_C\}$.

The algorithm makes this correction explicit.  For each selected \(2\times2\) block, the algorithm keeps the heavier local orientation intact and deletes one edge from the lighter orientation, choosing the deleted edge according to the \(P^*\)-masses.  The resulting matrix $A^\circ$ is then modified on boundary edges: edges leaving a selected left vertex are divided by the local correction factor $s_C$. After this is done across all selected blocks, the resulting estimate approximates the weighted number of perfect matchings in the modified matrix by the Bethe permanent and then multiplies by the product of the local correction factors.

Below is the explicit pseudocode for the correction procedure.

\begin{algorithm}[H]
\caption{Bethe guided weighted $C_4$-correction subroutine}
\label{alg:dense-c4-correction}
\small
\begin{algorithmic}[1]
\Require A nonnegative matrix $A$, a Bethe optimizer $P^*$, and a vertex-disjoint collection $\mathcal C$ of $\lambda$-admissible blocks.
\Ensure  A lower estimate \(L_{\mathrm{out}}\le \operatorname{per}(A)\).

\State Initialize $A^\circ\gets A$.
\For{each $C\in\mathcal C$}
    \State Write $L(C)=\{\ell_1,\ell_2\}$ and $R(C)=\{r_1,r_2\}$.
    \State Set
    \[
        U_C=A_{\ell_1r_1}A_{\ell_2r_2},
        \qquad
        V_C=A_{\ell_1r_2}A_{\ell_2r_1},
        \qquad
        s_C=1+\frac{\min\{U_C,V_C\}}{\max\{U_C,V_C\}}.
    \]
    \If{$U_C\ge V_C$}
        \State Let $b:=P^*_{\ell_1r_2}$ and $c:=P^*_{\ell_2r_1}$.
        \If{$b\le c$}
            \State Delete $(\ell_1,r_2)$ from $A^\circ$.
        \Else
            \State Delete $(\ell_2,r_1)$ from $A^\circ$.
        \EndIf
    \Else
        \State Let $a:=P^*_{\ell_1r_1}$ and $d:=P^*_{\ell_2r_2}$.
        \If{$a\le d$}
            \State Delete $(\ell_1,r_1)$ from $A^\circ$.
        \Else
            \State Delete $(\ell_2,r_2)$ from $A^\circ$.
        \EndIf
    \EndIf
\EndFor

\State Define a weighted matrix $H$ on the support of $A^\circ$ as follows.
\For{each positive entry $e=(\ell,r)$ of $A^\circ$}
    \If{$\ell\in L(C)$ for some $C\in\mathcal C$, and $r\notin R(C)$}
        \State Set $H_e:=A^\circ_e/s_C$.
    \Else
        \State Set $H_e:=A^\circ_e$.
    \EndIf
\EndFor

\State Compute the weighted Bethe permanent $\operatorname{per}_B(H)$.
\State \Return
\[
L_{\mathrm{out}}:=\left(\prod_{C\in\mathcal C}s_C\right)\operatorname{per}_B(H).
\]
\end{algorithmic}
\end{algorithm}

It is straightforward to see that Algorithm~\ref{alg:dense-c4-correction} runs in polynomial time. To verify that the worst case approximation ratio is indeed $(\sqrt2 - \varepsilon)^n$, we must show that when $\Gamma_\lambda(A,P^*)<\tau n$, the quantity $\frac{\per(A)}{\per_B(A)}$ is at most $(\sqrt2 - \varepsilon)^n$, and that when $\Gamma_\lambda(A,P^*)\ge \tau n$, the correction procedure leads to the claimed correction. These two properties follow from Theorems \ref{thm:sparse-c4} and \ref{thm:dense-c4} respectively.

\section{Proof of Theorem \ref{thm:dense-c4}}
\label{sec:dense-case}

\begin{proof}
The algorithmic procedure which achieves this guarantee is exactly Algorithm~\ref{alg:dense-c4-correction}. Our proof  proceeds by analyzing its steps and the resulting approximation ratio.

Let $P^*$ be the Bethe optimizer under consideration. Let $\mathcal C$ be the collection of blocks selected by the greedy procedure in Algorithm~\ref{alg:bethe-c4-correction}, and set
\[
    S:=\sum_{C\in\mathcal C}\log s_C=\Gamma_\lambda(A,P^*).
\]
By assumption, $S\ge \tau n$. Since the components in $\mathcal C$ are vertex-disjoint, and each uses two left vertices, we also have $|\mathcal C|\le n/2$.

Recall that Algorithm~\ref{alg:dense-c4-correction} constructs a matrix $A^\circ$ by deleting one internal edge from each component $C\in\mathcal C$. We will show that the Bethe free energy of $A^\circ$ is roughly similar to that of $A$ by constructing a feasible point $P^\circ\in\mathcal P(A^\circ)$. Fix a component $C\in\mathcal C$, and write its vertices as
\[
L(C)=\{\ell_1,\ell_2\},
\qquad
R(C)=\{r_1,r_2\}.
\]
For notational convenience write
\[
    a=P^*_{\ell_1r_1},\quad
    b=P^*_{\ell_1r_2},\quad
    c=P^*_{\ell_2r_1},\quad
    d=P^*_{\ell_2r_2}.
\]
Since $C$ is $\lambda$-admissible, the total $P^*$-leakage from the two rows and two columns of $C$ is at most $\lambda$. In particular, the two internal row sums and the two internal column sums all lie in $[1-\lambda,1]$.

Suppose first that $U_C\ge V_C$, so that the parallel orientation is at least as heavy as the cross orientation. If $b\le c$, Algorithm~\ref{alg:dense-c4-correction} deletes $(\ell_1,r_2)$ and we define
\[
    a^\circ=a+b,
    \qquad
    b^\circ=0,
    \qquad
    c^\circ=c-b,
    \qquad
    d^\circ=d+b.
\]
Equivalently,
\[
P^\circ_{\ell_1r_1}=a+b,
\quad
P^\circ_{\ell_1r_2}=0,
\quad
P^\circ_{\ell_2r_1}=c-b,
\quad
P^\circ_{\ell_2r_2}=d+b.
\]
All edges outside selected blocks keep their $P^*$ value. Observe that this solution preserves the two row sums and the two column sums. Furthermore, the non-negativity constraints are met because $c^\circ=c-b\ge0$, and $a^\circ\le1$, $d^\circ\le1$ by the original row and column constraints for $P^*$. Therefore the new values remain feasible, and the deleted edge receives value $0$. The case $c<b$ is symmetric. If $U_C<V_C$, we perform the same alternating-cycle transformation with the roles of the two orientations reversed. Since the components of $\mathcal C$ are vertex-disjoint, these local transformations are compatible and yield a feasible point $P^\circ\in\mathcal P(A^\circ)$.

We now compare the Bethe objective of $P^\circ$ with that of $P^*$. The energy term does not decrease. Indeed, in the case $U_C\ge V_C$ and $b\le c$, the internal energy change is
\[
    b\log\frac{A_{\ell_1r_1}A_{\ell_2r_2}}{A_{\ell_1r_2}A_{\ell_2r_1}}
    =
    b\log\frac{U_C}{V_C}
    \ge0.
\]

It remains to control the Bethe entropy term. We use the following elementary estimate.

\begin{claim} \label{clm:near_closed_entropy}
There is an absolute constant $C>0$ such that the following holds for all sufficiently small $\lambda>0$. Let
\[
B=
\begin{pmatrix}
x_{11}&x_{12}\\x_{21}&x_{22}
\end{pmatrix}
\]
be a nonnegative $2\times2$ matrix whose two row sums and two column sums all lie in $[1-\lambda,1]$. Then
\[
    \left|\sum_{i,j=1}^2\phi(x_{ij})\right|
    \le C\lambda\log(1/\lambda).
\]
Consequently, if $B'$ is another nonnegative $2\times2$ matrix with the same row and column sums, then
\[
    \sum_{i,j=1}^2\phi(B'_{ij})
    \ge
    \sum_{i,j=1}^2\phi(B_{ij})
    -C\lambda\log(1/\lambda).
\]
\end{claim}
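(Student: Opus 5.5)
The plan is to show that a nonnegative $2\times2$ matrix with all four marginals in $[1-\lambda,1]$ must be close to a matrix of the form $\begin{pmatrix} p & 1-p\\ 1-p & p\end{pmatrix}$. On that one-parameter family the Bethe entropy vanishes identically, which is exactly the $C_4$ example. So it suffices to check that $\sum\phi$ changes by only $O(\lambda\log(1/\lambda))$ under the perturbation.

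\textbf{Step 1: structure.} Set $p:=x_{11}$. The first row gives $x_{12}\in[1-\lambda-p,\,1-p]$, and the first column gives $x_{21}\in[1-\lambda-p,\,1-p]$. The second row then gives $x_{22}\in[1-\lambda-x_{21},\,1-x_{21}]\subseteq[p-\lambda,\,p+\lambda]$. Thus every entry lies within $\lambda$ of the corresponding entry of
\[
M_p=\begin{pmatrix} p&1-p\\1-p&p\end{pmatrix},
\]
and every entry also lies in $[0,1]$.

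\textbf{Step 2: cancellation on the family.} Since $\phi(t)=-t\log t+(1-t)\log(1-t)$, we have $\phi(1-t)=-\phi(t)$. Hence
\[
\sum_{i,j}\phi\bigl((M_p)_{ij}\bigr)=2\phi(p)+2\phi(1-p)=0 .
\]

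\textbf{Step 3: continuity estimate.} On $[0,1]$, $\phi$ has the modulus of continuity $|\phi(s)-\phi(t)|\le C_0\,\delta\log(1/\delta)$ whenever $|s-t|\le\delta\le 1/e$. This holds because $\phi'(t)=-\log t-\log(1-t)-2$, so the only singularities are logarithmic ones at $0$ and $1$. Concretely, $t\mapsto t\log t$ satisfies $|u\log u-v\log v|\le |u-v|\log(1/|u-v|)+O(|u-v|)$ for small $|u-v|$. Applying this to the four entries, each within $\lambda$ of $M_p$, gives
\[
\Bigl|\sum_{i,j}\phi(x_{ij})\Bigr|\le 4C_0\lambda\log(1/\lambda)
\]
for $\lambda$ small. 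The main technical point, and the step I expect to need the most care, is this uniform modulus of continuity near the endpoints. It follows from the standard argument for $t\log t$, using concavity of $u\log(1/u)$ on $[0,1/e]$ so that $|u\log u-v\log v|\le h\log(1/h)$ with $h=|u-v|$ once $h\le 1/e$.

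\textbf{Step 4: the consequence.} If $B'$ has the same row and column sums as $B$, then $B'$ also satisfies the hypothesis, so both $|\sum\phi(B'_{ij})|$ and $|\sum\phi(B_{ij})|$ are at most $C\lambda\log(1/\lambda)$. Their difference is therefore at most $2C\lambda\log(1/\lambda)$, and the stated inequality follows after renaming the constant.
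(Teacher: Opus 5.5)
Your proposal is correct and follows essentially the paper's argument: you approximate $B$ entrywise to within $\lambda$ by the closed-face matrix with entries $(x_{11},1-x_{11},1-x_{11},x_{11})$, note that $\sum\phi$ vanishes there because $\phi(1-t)=-\phi(t)$, invoke the $O(\lambda\log(1/\lambda))$ modulus of continuity of $\phi$, and get the consequence by applying the bound to both $B$ and $B'$. Your Step 3 justification of that modulus, via concavity of $t\mapsto -t\log t$, fills in a detail that the paper states without proof.
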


\begin{proof}[Proof of Claim \ref{clm:near_closed_entropy}]
Set $x=x_{11}$. Since the first row sum is in $[1-\lambda,1]$, the entry $x_{12}$ differs from $1-x$ by at most $\lambda$. Since the first column sum is in $[1-\lambda,1]$, the entry $x_{21}$ differs from $1-x$ by at most $\lambda$. Finally, since the second row sum and first column sum both lie in $[1-\lambda,1]$, the entry $x_{22}$ differs from $x$ by at most $\lambda$.

Note that the function $\phi$ satisfies 
\[
    |\phi(u)-\phi(v)|\le C\lambda\log(1/\lambda)
    \qquad\text{whenever } |u-v|\le \lambda,
\]
for all $u,v\in[0,1]$ and all sufficiently small $\lambda$. Along the exactly closed face $(x,1-x,1-x,x)$ we have
\[
    \phi(x)+\phi(1-x)+\phi(1-x)+\phi(x)=0.
\]
Combining these two facts gives the desired bound. The final assertion follows by applying the same bound to both $B$ and $B'$.
\end{proof}

Applying Claim \ref{clm:near_closed_entropy} to each selected block and using the nondecrease of the energy term gives
\[
\beta(A^\circ,P^\circ)
\ge
\beta(A,P^*)
-
C_1\lambda\log(1/\lambda)n.
\]
By definition, $\beta(A,P^*)=\log\operatorname{per}_B(A)$, and since $P^\circ$ is feasible for $A^\circ$, this implies
\begin{equation} \label{eq:weighted_bethe_after_deletion}
\log\operatorname{per}_B(A^\circ)
\ge
\beta(A^\circ,P^\circ)
\ge
\log\operatorname{per}_B(A)
-
C_1\lambda\log(1/\lambda)n.
\end{equation}

With this fact in hand, we leverage the difference between $A^\circ$ and $A$ in order to explicitly correct the Bethe error on $A$. Let $\mathcal M(A^\circ)$ denote the set of perfect matchings in the support of $A^\circ$. For a matching $M\in\mathcal M(A^\circ)$, say that a component $C\in\mathcal C$ is closed by $M$ if the two left vertices of $C$ are matched by $M$ to the two right vertices of $C$.

On each closed component $C$, the matching $M$ uses the unique internal perfect matching of $C$ which remains in $A^\circ$ and corresponds to the heavier orientation. Since $A$ contains both internal perfect matchings of $C$, we may flip independently on each closed component. The total weighted contribution from the two choices on $C$ is multiplied by exactly $s_C$.

Moreover, two distinct matchings $M,M'\in\mathcal M(A^\circ)$ generate disjoint flip orbits in the support of $A$. Indeed, if two matchings in $A^\circ$ differed by a nontrivial flip on some $C\in\mathcal C$, then one of them would use the internal edge deleted from $A^\circ$, contradicting membership in $\mathcal M(A^\circ)$.

Recall that Algorithm~\ref{alg:dense-c4-correction} defines a weighted matrix $H$ by dividing boundary edges leaving selected left vertices by their local $s_C$. For $M\in\mathcal M(A^\circ)$, let $b_C(M)$ be the number of selected left vertices of $C$ which are matched outside $R(C)$. Then
\[
    w_H(M)=w_A(M)\prod_{C\in\mathcal C}s_C^{-b_C(M)},
\]
where $w_A(M)=\prod_{e\in M}A_e$. If $C$ is closed by $M$, then $b_C(M)=0$. If $C$ is not closed by $M$, then at least one selected left vertex is matched outside $R(C)$, so $b_C(M)\ge1$. Therefore
\[
\left(\prod_{C\in\mathcal C}s_C\right)w_H(M)
\le
w_A(M)\prod_{C\text{ closed by }M}s_C.
\]
The right-hand side is exactly the weighted contribution of the flip orbit generated from $M$. Summing over $M\in\mathcal M(A^\circ)$ gives
\begin{equation} \label{eq:weighted_orbit_lower}
    \left(\prod_{C\in\mathcal C}s_C\right)\operatorname{per}(H)
    \le
    \operatorname{per}(A).
\end{equation}
Since $\operatorname{per}_B(H)\le\operatorname{per}(H)$ by Theorem \ref{lem:bethe_perm_known}, the output of Algorithm~\ref{alg:dense-c4-correction} is indeed a lower bound:
\[
L_{\rm out}
=
\left(\prod_{C\in\mathcal C}s_C\right)\operatorname{per}_B(H)
\le
\operatorname{per}(A).
\]

It remains to compare $\operatorname{per}_B(H)$ with $\operatorname{per}_B(A)$. Since $H$ is obtained from $A^\circ$ by dividing boundary edges of each selected block $C$ by $s_C$, we have
\[
\beta(H,P^\circ)
=
\beta(A^\circ,P^\circ)
-
\sum_{C\in\mathcal C}(\log s_C)
\sum_{\ell\in L(C)}\sum_{r\notin R(C)}P^\circ_{\ell r}.
\]
The transformation from $P^*$ to $P^\circ$ does not change boundary edge values. Since each selected block is $\lambda$-admissible, its selected-left boundary mass is at most $\lambda$. Therefore
\[
\sum_{C\in\mathcal C}(\log s_C)
\sum_{\ell\in L(C)}\sum_{r\notin R(C)}P^\circ_{\ell r}
\le
\lambda\sum_{C\in\mathcal C}\log s_C
\le
\lambda n\log2.
\]
Combining this with \eqref{eq:weighted_bethe_after_deletion} gives
\begin{equation} \label{eq:weighted_bethe_H}
\log \operatorname{per}_B(H)
\ge
\log \operatorname{per}_B(A)-C_2\lambda\log(1/\lambda)n
\end{equation}
for an absolute constant $C_2$. Now, by Equation \eqref{eq:weighted_bethe_H},
\begin{align*}
L_{\mathrm{out}} &=
\left(\prod_{C\in\mathcal C}s_C\right)\operatorname{per}_B(H) \\
&=
\exp(S)\operatorname{per}_B(H) \\
&\ge
\exp(S-C_2\lambda\log(1/\lambda)n)\operatorname{per}_B(A).
\end{align*}
Plugging in the above and using Theorem \ref{lem:bethe_perm_known}, we obtain
\begin{align*}
\frac{\operatorname{per}(A)}{L_{\mathrm{out}}}
&\le
\exp(C_2\lambda\log(1/\lambda)n)\frac{(\sqrt2)^n}{e^S} \\
&\le
\left[\sqrt2\,\exp\{-\tau+C_2\lambda\log(1/\lambda)\}\right]^n.
\end{align*}
as $S\ge \tau n$. Choosing $\lambda_0=\lambda_0(\tau)>0$ small enough so that $C_2\lambda\log(1/\lambda)<\tau/2$ whenever $0<\lambda\le\lambda_0$ gives the result
\[
\frac{\operatorname{per}(A)}{L_{\mathrm{out}}}
\le
\left(\sqrt2 e^{-\tau/2}\right)^n
=
(\sqrt2-\varepsilon)^n
\]
for some $\varepsilon=\varepsilon(\tau)>0$.
\end{proof}

\section{Proof of Theorem \ref{thm:sparse-c4}}
\label{sec:sparse-case}

The proof has three parts. First, we use the row-exposure entropy comparison of Anari and Rezaei~\cite{anari2019tightanalysisbetheapproximation} to show that near equality in the Bethe upper bound forces
almost all true marginal rows to be close to two-point vectors. Second, we prove a block-exposure refinement showing that these two-point rows must organize into
\(C_4\)-components; longer paths and cycles lose a linear amount of entropy. Third, we transfer this true-marginal structure to the Bethe optimizer \(P^*\)
using concavity of the Bethe entropy, and show that the corresponding weighted \(2\times2\) blocks have nearly balanced orientation products.
We will signpost our use of the results of \cite{anari2019tightanalysisbetheapproximation}; we recommend the reader also review their methodology. 

We begin with the following useful definition.

\begin{definition}
Given a doubly stochastic matrix $P \in \mathbb{R}^{n \times n}$, we say a row $i$ is \emph{$\delta$-two point} if
there exist two distinct indices \(j_1,j_2\) such that
\[
    \left|P_{ij_1}-\frac12\right|\le \delta,
    \qquad
    \left|P_{ij_2}-\frac12\right|\le \delta .
\]
\end{definition}

At a high level, our proof proceeds by first showing that if the worst-case ratio of $(\sqrt{2} - o(1))^n$ is achieved, then true marginals must look like two-point rows, and furthermore that they will be organized according to $4$-cycles. Once we have shown this, we show that most of these true-tight $4$-cycles are low-leakage blocks for the Bethe optimizer and have nearly balanced weighted orientation products. This presence of many high-score admissible blocks contradicts the assumption that $\Gamma_\lambda(A,P^*)$ is small.

Before we proceed to the analysis, we note that, although the matrix \(A\) is weighted, the comparison at the true marginal matrix is purely entropic. 

\begin{observation}\label{obs:true-marginal-entropy-gap}
Let \(P\) be the true marginal matrix of \(\mu_A\). Then
\[
    \log \operatorname{per}(A)-\beta(A,P)
    =
    H(\mu_A)-\Phi(P).
\]
Consequently,
\[
    \log \frac{\operatorname{per}(A)}{\operatorname{per}_B(A)}
    \le
    H(\mu_A)-\Phi(P).
\]
\end{observation}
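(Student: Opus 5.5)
The argument is a direct computation at the true marginals, followed by a comparison with the Bethe maximum. Write $w(\sigma)=\prod_{i}A_{i\sigma(i)}$, so that $\mu_A(\sigma)=w(\sigma)/\operatorname{per}(A)$ for every $\sigma$ with $w(\sigma)>0$. Expand the entropy of $\mu_A$:
\[
H(\mu_A)=-\sum_{\sigma}\mu_A(\sigma)\log\mu_A(\sigma)
=\log\operatorname{per}(A)-\sum_{\sigma}\mu_A(\sigma)\sum_{i=1}^n\log A_{i\sigma(i)}.
\]
Next, exchange the order of summation in the last term. This groups the matchings according to the edge $(i,j)$ they use, and gives
\[
\sum_{\sigma}\mu_A(\sigma)\sum_{i}\log A_{i\sigma(i)}=\sum_{i,j}P_{ij}\log A_{ij}=\mathcal E_A(P).
\]
Since $P$ is supported on the positive entries of $A$, we have $\beta(A,P)=\mathcal E_A(P)+\Phi(P)$. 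Substituting yields $\log\operatorname{per}(A)=H(\mu_A)+\mathcal E_A(P)=H(\mu_A)+\beta(A,P)-\Phi(P)$, which is the claimed identity.

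The only point requiring care is the zero-entry convention. If $A_{ij}=0$, then every $\sigma$ with $\sigma(i)=j$ has $\mu_A(\sigma)=0$, so $P_{ij}=0$. By the convention $0\log(0/0)=0$, such terms contribute nothing to $\mathcal E_A(P)$ or to $\beta(A,P)$. The expression $\log A_{i\sigma(i)}$ only appears for matchings with positive weight, where it is finite. Also, $\per(A)>0$ is implicit here, since otherwise $\mu_A$ is undefined.

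For the consequence, note that $P\in\mathcal P(A)\subseteq B_n$. Therefore $\beta(A,P)\le\max_{X\in B_n}\beta(A,X)=\log\operatorname{per}_B(A)$. Combining this with the identity gives
\[
\log\frac{\operatorname{per}(A)}{\operatorname{per}_B(A)}\le\log\operatorname{per}(A)-\beta(A,P)=H(\mu_A)-\Phi(P).
\]
No step here is a genuine obstacle; the bookkeeping of zero entries is the only subtlety.
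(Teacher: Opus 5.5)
Your proof is correct and follows essentially the same route as the paper's: write $\log\mu_A(\sigma)=\log w_A(\sigma)-\log\operatorname{per}(A)$, take the $\mu_A$-expectation to get $\log\operatorname{per}(A)=H(\mu_A)+\mathcal E_A(P)$, and then use that $P$ is feasible so $\beta(A,P)\le\log\operatorname{per}_B(A)$. Your explicit handling of zero entries and of the requirement $\operatorname{per}(A)>0$ spells out details that the paper leaves implicit.
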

Indeed, under \(\mu_A\), \(\log \mu_A(M)=\log w_A(M)-\log \operatorname{per}(A)\), so
\[
\log\operatorname{per}(A)=H(\mu_A)+\mathbb E_{\mu_A}\log w_A(M).
\]
Since \(\mathbb E_{\mu_A}\log w_A(M)=\sum_e P_e\log A_e=\mathcal E_A(P)\), the identity follows.

Thus, throughout the proof, we will work with the quantity $H(\mu_A)-\Phi(P)$.

We can now begin with the following lemma. The strategy of its proof is similar to the coordinate merging framework utilized in \cite{anari2019tightanalysisbetheapproximation}. We will also utilize the following consequences of their analysis: the row-slack decomposition, the coordinate-merging monotonicity lemma, and the three-dimensional equality cases of the one-row inequality.

\begin{lemma}
\label{lem:true_two_point_quantitative}
Fix $\delta>0$ sufficiently small and $\kappa>0$. There exists a constant $c=c(\delta,\kappa)>0$ such that the following holds.

Let $A\in\mathbb R_{\ge0}^{n\times n}$ have $\per(A)>0$. Let $P$ denote the true marginal matrix of the Gibbs measure $\mu_A$. If at least $\kappa n$ rows of $P$ are not $\delta$-two point, then
\[
    \frac{\operatorname{per}(A)}{\operatorname{per}_B(A)}
    \le
    \bigl(\sqrt2-c\bigr)^n .
\]
Equivalently, if
\[
    \frac{\operatorname{per}(A)}{\operatorname{per}_B(A)}
    >
    \bigl(\sqrt2-c\bigr)^n,
\]
then all but at most $\kappa n$ rows of $P$ are $\delta$-two point.
\end{lemma}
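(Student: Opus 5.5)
We follow the row-exposure framework of Anari and Rezaei and make its slack explicit. By Observation~\ref{obs:true-marginal-entropy-gap}, it suffices to show
\[
H(\mu_A)-\Phi(P)\le n\log\sqrt2-c'n
\]
for some $c'=c'(\delta,\kappa)>0$; then $c$ is chosen with $\sqrt2 e^{-c'}=\sqrt2-c$.

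\textbf{Row decomposition.} Expose the rows of a random matching $M\sim\mu_A$ in a uniformly random order $\pi$. The chain rule gives
\[
H(\mu_A)=\mathbb E_\pi\sum_i H\bigl(\sigma(i)\mid \sigma(\pi\text{-earlier rows})\bigr).
\]
The Anari--Rezaei analysis bounds each row's expected conditional entropy by a one-row functional $F(P_{i,*})$. Their key one-row inequality states
\[
F(p)\le \psi(p)+\log\sqrt2
\quad\text{for every } p\in\Delta_n.
\]
Writing $\mathrm{slack}(p):=\psi(p)+\log\sqrt2-F(p)\ge0$, the row-slack decomposition yields
\[
H(\mu_A)-\Phi(P)\le n\log\sqrt2-\sum_i \mathrm{slack}(P_{i,*}).
\]
We take this decomposition, together with the coordinate-merging monotonicity lemma and the characterization of equality cases, as the inputs imported from \cite{anari2019tightanalysisbetheapproximation}.

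\textbf{Quantitative stability.} The heart of the argument is a stability statement: there is $\eta=\eta(\delta)>0$ such that $\mathrm{slack}(p)\ge\eta$ for every $p\in\Delta_n$, uniformly in $n$, that is not $\delta$-two point. Given this, at least $\kappa n$ rows contribute slack $\ge\eta$, so $c'=\kappa\eta$ works.

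To prove the stability statement, reduce to bounded dimension by coordinate merging. Their monotonicity lemma says that merging two coordinates of $p$ does not increase the slack, or at least the relevant normalized quantity. We therefore merge small coordinates, those below some $\delta'$, until $p$ is reduced to a vector with boundedly many coordinates, or to three coordinates in the manner of their proof. This must be done so that "far from $(1/2,1/2)$" is preserved:

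\begin{itemize}
\item If $p$ has no coordinate within $\delta$ of $1/2$, or only one, merge the coordinates into a vector of dimension at most $3$ that still has no two coordinates near $1/2$.
\item Track the merging so that every coordinate near $1/2$ is kept unmerged. Merging others into it would only push it away from $1/2$, which is harmless, but we must still check that the merged vector is not $\delta$-two point.
\end{itemize}

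On the compact set of $3$-dimensional (or bounded-dimensional) vectors that avoid the $\delta$-neighborhood of the two-point extremizers, the slack is continuous and, by their equality-case analysis, strictly positive. Compactness then gives a positive minimum $\eta(\delta)$. Degenerate points, such as a coordinate equal to $1$, must be checked to have positive slack; there $F=0=\psi$, so the slack equals $\log\sqrt2$, which is fine.

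\textbf{Main obstacle.} The hard step is making the merging reduction quantitative and uniform in $n$. The monotonicity lemma must be applied in a direction that does not create a near-two-point vector out of a far one. Two rescues are available:

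\begin{enumerate}
\item If merging could create such a vector, for instance $(1/2,1/4,1/4)$ merging to $(1/2,1/2)$, handle it separately by choosing which pairs to merge: merge into the largest coordinate first.
\item Alternatively, argue directly in dimension $3$ that the slack at $(1/2,1/4,1/4)$ is positive, and use continuity near such configurations.
\end{enumerate}

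I would first attempt the largest-coordinate merging rule and verify that the resulting vector stays $\delta/2$-far from $(1/2,1/2,0)$ up to permutation. Failing that, I would use a case analysis based on the top two coordinates.
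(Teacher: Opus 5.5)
\textbf{There is a genuine gap.} Your architecture matches the paper's: the Anari--Rezaei row-slack decomposition, a uniform bound $\mathrm{slack}(p)\ge\eta(\delta)$ for rows that are not $\delta$-two point, merging down to $\Delta_3$, then compactness plus the three-dimensional equality cases. But the one step that is actually new here is left unresolved. That step is showing that the merging reduction cannot turn a row far from two-point into one near $\mathcal E_3=\{(1/2,1/2,0)\text{ and permutations}\}$. The rule you propose for it does not work. The monotonicity you import only holds when the two merged masses sum to at most $1/2$; this is why the paper merges only while $r\ge4$ coordinates remain, where the two smallest sum to at most $2/r\le 1/2$. ``Merge into the largest coordinate first'' violates this hypothesis exactly when the largest coordinate is near $1/2$, which is the case you are worried about. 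So you lose $\mathrm{slack}(p)\ge\mathrm{slack}(\text{merged})$, and your remark that pushing that coordinate away from $1/2$ is ``harmless'' is mistaken. Your example $(1/2,1/4,1/4)$ is not a problem at all: it is already three-dimensional, so nothing gets merged. Rescue 2 checks a single point and gives no uniform control over where merging can land.

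\textbf{The missing idea is short.} Always merge the two \emph{smallest} active coordinates until three remain. Let $q_1\le q_2\le q_3$ be the final coordinates. Any final coordinate created by a merge has mass at most $2q_1$. To see this, suppose its last merge combined $a\le b$, the two smallest active masses at that time. Then either it is $q_1$ itself, or $q_1$ descends from a block outside that merge, which already had mass $\ge b$. Hence $a+b\le 2b\le 2q_1$. Now suppose the output were within $a_0\delta$ (in $\ell_1$) of $(0,1/2,1/2)$. Then $q_2,q_3>2q_1$, so both are original coordinates of $p$ within $\delta$ of $1/2$, contradicting that $p$ is not $\delta$-two point. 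So the output lies in the compact set $\{q\in\Delta_3:\operatorname{dist}_1(q,\mathcal E_3)\ge a_0\delta\}$, where $\Lambda$ has a positive minimum by continuity and the equality cases.

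Your other variant, merging only coordinates below some $\delta'$, does preserve non-two-pointness, since merged masses stay below $2\delta'$. But it leaves you in $\Delta_d$ with $d\approx 1/\delta'$. The equality characterization you are importing is the three-dimensional one, and getting from $\Delta_d$ down to $\Delta_3$ requires exactly the tracking argument above.
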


\begin{proof}
Let $\mu=\mu_A$, and let $M\sim\mu$. For $\ell\in L$, write $M(\ell)$ for the right vertex matched to $\ell$. For a probability vector $p=(p_1,\dots,p_m)\in\Delta_m$, define
\[
\Lambda(p)
:=
\log\sqrt2
-
\mathbb E_{\pi}
\left[
\sum_{k=1}^m
p_k
\log\left(\sum_{j:\,j\ge_\pi k}p_j\right)
-
\sum_{k=1}^m
(1-p_k)\log(1-p_k)
\right],
\]
where $\pi$ is a uniformly random ordering of $[m]$, and $j\ge_\pi k$ means that $j=k$ or $j$ appears after $k$ in the ordering $\pi$. All expressions are interpreted with the usual convention $0\log 0=0$.

We use the following consequence of the row-exposure analysis of \cite{anari2019tightanalysisbetheapproximation}. Applied to the true marginal matrix $P$, it gives the row-slack decomposition
\[
    H(M)-\Phi(P)
    \le
    n\log\sqrt2
    -
    \sum_{\ell\in L}\Lambda(P_{\ell,\cdot}).
\]
Moreover, the one-row inequality of \cite{anari2019tightanalysisbetheapproximation} gives $\Lambda(p)\ge 0$ for every probability vector $p$.

We claim that there exists a constant $\gamma(\delta)>0$ such that every probability vector $p$ which is not $\delta$-two point satisfies $\Lambda(p)\ge \gamma(\delta)$. Let $p=(p_1,\dots,p_m)\in\Delta_m$ be a probability vector which is not $\delta$-two point. Zero coordinates do not affect the row expression, so we delete all zero coordinates. Starting from the remaining positive coordinates, repeatedly merge the two currently smallest coordinates until only three coordinates remain. If fewer than three positive coordinates remain, append zero coordinates. Let $H(p)\in\Delta_3$ be the resulting three-coordinate vector.

We now justify that every merge used above is allowed by the coordinate-merging monotonicity lemma of \cite{anari2019tightanalysisbetheapproximation}. Suppose that at some stage there are $r>3$ active coordinates, and let $a\le b$ be the two smallest active masses. Since the total mass is $1$, we have
\[
    a+b\le \frac{2}{r}\le \frac12.
\]
Thus the merging of \cite{anari2019tightanalysisbetheapproximation} (Proof of Lemma 13) implies that these merges cannot increase the one-row slack, and hence
\[
    \Lambda(p)\ge \Lambda(H(p)).
\]
Let
\[
    \mathcal E_3
    :=
    \left\{
        \left(\frac12,\frac12,0\right),
        \left(\frac12,0,\frac12\right),
        \left(0,\frac12,\frac12\right)
    \right\}.
\]
We claim that $H(p)$ is bounded away from $\mathcal E_3$ by a quantity depending only on $\delta$. Write the coordinates of $H(p)$ in increasing order as $q_1\le q_2\le q_3$.

We first record a simple property of the smallest-coordinate merging procedure. If one of the final coordinates is not an original coordinate of $p$, then its mass is at most $2q_1$. Indeed, suppose a final coordinate $w$ was created at its last internal merge by merging two active blocks of masses $a\le b$. At that moment, $a$ and $b$ were the two smallest active masses, so every active block outside this merge had mass at least $b$. If $w=q_1$, then $w\le 2q_1$ is immediate. Otherwise, $q_1$ descends from an active block outside this merge. Since subsequent merges only increase masses, we have $q_1\ge b$. Hence $w=a+b\le 2b\le 2q_1$.

Now fix an absolute constant $a_0>0$ sufficiently small and take $\delta>0$ sufficiently small. Suppose, for contradiction, that $\operatorname{dist}_1(H(p),\mathcal E_3)<a_0\delta$. Since $\mathcal E_3$ is invariant under coordinate permutations, after sorting coordinates increasingly this implies
\[
    \left\|(q_1,q_2,q_3)-\left(0,\frac12,\frac12\right)\right\|_1
    <a_0\delta.
\]
In particular,
\[
    q_1<a_0\delta,
    \qquad
    \left|q_2-\frac12\right|<a_0\delta,
    \qquad
    \left|q_3-\frac12\right|<a_0\delta.
\]
For $a_0$ and $\delta$ chosen sufficiently small, this gives $q_2>2q_1$ and $q_3>2q_1$. From the property proved above, neither $q_2$ nor $q_3$ can be a merged coordinate. Therefore $q_2$ and $q_3$ must both be original coordinates of $p$. But then
\[
    \left|q_2-\frac12\right|<a_0\delta<\delta,
    \qquad
    \left|q_3-\frac12\right|<a_0\delta<\delta,
\]
so $p$ has two original coordinates within $\delta$ of $1/2$. This contradicts the assumption that $p$ is not $\delta$-two point. Therefore $\operatorname{dist}_1(H(p),\mathcal E_3)\ge a_0\delta$. Now note that the function $\Lambda$ is continuous on $\Delta_3$ and moreover, the equality-case form of \cite{anari2019tightanalysisbetheapproximation} gives
\[
    \Lambda(q)=0
    \qquad\Longleftrightarrow\qquad
    q\in\mathcal E_3
    \qquad
    \text{for }q\in\Delta_3.
\]
By compactness,
\[
    \gamma(\delta)
    :=
    \inf
    \left\{
        \Lambda(q):
        q\in\Delta_3,\;
        \operatorname{dist}_1(q,\mathcal E_3)\ge a_0\delta
    \right\}
    >0.
\]
Consequently,
\[
    \Lambda(p)
    \ge
    \Lambda(H(p))
    \ge
    \gamma(\delta),
\]
proving the claim. Now assume that at least $\kappa n$ rows of $P$ are not $\delta$-two point. Applying the row-slack lower bound to these rows gives
\[
    \sum_{\ell\in L}\Lambda(P_{\ell,\cdot})
    \ge
    \kappa n\,\gamma(\delta).
\]
Plugging this into the row-slack decomposition gives
\[
    H(M)-\Phi(P)
    \le
    n\log\sqrt2-\kappa n\,\gamma(\delta).
\]
Since
\[
    \log\operatorname{per}(A)=H(M)+\mathcal E_A(P)
    \qquad\text{and}\qquad
    \beta(A,P)=\mathcal E_A(P)+\Phi(P),
\]
we have
\[
    \log\operatorname{per}(A)-\beta(A,P)
    =
    H(M)-\Phi(P)
    \le
    n\log\sqrt2-\kappa n\,\gamma(\delta).
\]
Since $P$ is feasible for the Bethe variational problem, $\beta(A,P)\le \log\operatorname{per}_B(A)$, and therefore
\[
    \frac{\operatorname{per}(A)}{\operatorname{per}_B(A)}
    \le
    \left(\sqrt2\,e^{-\kappa\gamma(\delta)}\right)^n
    <
    (\sqrt2-c(\delta,\kappa))^n.
\]
\end{proof}

Note that the above Lemma only gives us that most rows are $\delta$-two point, and nothing about their structure. In order to argue about their structure, we first generalize the machinery of \cite{anari2019tightanalysisbetheapproximation} to an entropy comparison on blocks.

\begin{lemma}
\label{lem:row_block_exposure}
Let $A\in\mathbb R_{\ge0}^{n\times n}$ satisfy $\per(A)>0$, and let $\mu_A$ be the Gibbs measure on perfect matchings. Let $P=(P_{\ell r})$ be the true marginal matrix. For $S\subseteq L$, write $X_S:=(M(\ell))_{\ell\in S}$ and $\Phi_S(P):=\sum_{\ell\in S}\sum_{r\in R}\phi(P_{\ell r})$. Then
\[
    H(M)-\Phi(P)
    \le
    \bigl(H(X_S)-\Phi_S(P)\bigr)
    +
    (n-|S|)\log\sqrt2 .
\]
\end{lemma}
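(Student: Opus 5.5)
We follow the row-exposure argument of \cite{anari2019tightanalysisbetheapproximation}, but reveal the rows in $S$ all at once at the start rather than one by one. By the chain rule,
\[
H(M)=H(X_S)+H\bigl(X_{L\setminus S}\mid X_S\bigr),
\]
where $X_{L\setminus S}$ collects the matches of the rows outside $S$. Since $\Phi(P)=\Phi_S(P)+\sum_{\ell\notin S}\sum_r\phi(P_{\ell r})$, it suffices to show
\begin{equation}\label{eq:cond-block-goal}
H\bigl(X_{L\setminus S}\mid X_S\bigr)\le \sum_{\ell\notin S}\Bigl(\log\sqrt2+\sum_{r}\phi(P_{\ell r})\Bigr)
=\sum_{\ell\notin S}\Bigl(\log\sqrt2+\psi(P_{\ell,\cdot})\Bigr).
\end{equation}
That is, each unrevealed row should contribute at most its Anari--Rezaei allowance, which equals $\log\sqrt2$ plus its own row Bethe entropy.

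\textbf{Step 1: bound the conditional entropy row by row.} Condition on $X_S=x$. The remaining rows $L\setminus S$ are then perfectly matched to the unused columns $R\setminus x(S)$, under the conditional Gibbs measure $\mu_A(\cdot\mid X_S=x)$. Fix a uniformly random ordering $\pi$ of $L\setminus S$ and expose those rows in that order. The exact Anari--Rezaei argument (random ordering, then bounding each row's conditional entropy by the log of the number of still-available neighbours, averaged via the random order) gives
\[
H\bigl(X_{L\setminus S}\mid X_S\bigr)\le \sum_{\ell\notin S}\mathbb E_{\pi}\mathbb E\Bigl[\log\bigl|\{r:\ r\text{ still available when }\ell\text{ is exposed},\ \ldots\}\bigr|\Bigr].
\]
More precisely, for each $\ell\notin S$ it gives the row term $\mathbb E_\pi\sum_k p_k\log\bigl(\sum_{j\ge_\pi k}p_j\bigr)$ of the $\Lambda$ functional, taken with respect to row $\ell$'s marginals.

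The crucial point is that their row bound depends only on the marginal vector of row $\ell$ and on the random relative order of $\ell$'s neighbours. Their derivation goes through the fact that the column matched to $\ell$ is removed at a uniformly random time among the neighbours. When rows of $S$ are exposed first, any column $r$ consumed by $S$ is simply unavailable to $\ell$, and $\ell$ then has $P_{\ell r}$-probability zero of using it. Consuming columns first can therefore only shrink $\ell$'s available set, so it does not increase the bound.

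\textbf{Step 2: pass to the marginal bound.} To make Step 1 rigorous, I would reproduce their proof with the modified ordering: $S$ first, then a uniformly random order on $L\setminus S$. Equivalently, I would take a random order on all of $L$ and condition on the event that the rows of $S$ occupy the first $|S|$ positions. I would then check that their per-row Jensen step (which uses $\Pr[M(\ell)=r]=P_{\ell r}$ and the uniformity of $\pi$ among $\ell$'s neighbours) still yields $\sum_{\ell\notin S}\bigl(\psi(P_{\ell,\cdot})+\log\sqrt2-\Lambda(P_{\ell,\cdot})\bigr)$. Dropping the terms $\Lambda\ge 0$, using the one-row inequality, then gives \eqref{eq:cond-block-goal}.

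\textbf{Main obstacle.} The hard step is justifying that the Anari--Rezaei per-row bound survives when the exposure order is no longer uniform over all rows. Their argument uses symmetry of the ordering: the column $M(\ell)$'s other-side partner $M^{-1}(r)$ for $r\in N(\ell)$ comes before $\ell$ with conditional probability governed by uniform relative position. With $S$ forced first, the partners lying in $S$ always precede $\ell$, which shrinks the available set.

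My first approach is monotonicity: $\log$ of the available-mass count is increasing in the set, so forcing some partners earlier only decreases the bound. Rows outside $S$ keep uniform relative order, so the averaging over $\pi$ restricted to $L\setminus S$ gives exactly their expression evaluated at the neighbour set with $S$-partners removed. That quantity is at most the full expression, which is at most $\log\sqrt2+\psi(P_{\ell,\cdot})$.

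If the available-mass form is needed rather than a count, I would instead argue via the coordinate-merging monotonicity. Removing the $S$-partners is equivalent to placing those coordinates at the front of the order, and the one-row inequality $\Lambda\ge0$ should hold uniformly over such partially fixed orders, since it holds pointwise after averaging over the remaining relative orders.
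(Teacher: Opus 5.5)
Your argument is correct, but it takes a different route from the paper. The paper never reopens the Anari--Rezaei proof. Conditional on $X_S=x$, it applies Theorem~\ref{lem:bethe_perm_known} as a black box to the residual matrix $A[T\times R_x]$, with $T=L\setminus S$. This gives $H(M_T\mid X_S=x)\le |T|\log\sqrt2+\sum_{\ell\in T}\psi(Q^x_{\ell,\cdot})$ in terms of the residual marginals $Q^x$. The paper then replaces $Q^x$ by $P$ using $\mathbb E_x Q^x=P$ and concavity of $\psi$ on the simplex (Lemma~\ref{clm:psi-concave-simplex}).

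You instead re-run the chain rule with $S$ exposed first and a uniform order on $T$. You keep the unconditional row $P_{\ell,\cdot}$ in the cross-entropy proposal, so no averaging over $x$ is needed. The cost is that you rely on the internals of their proof rather than its statement.

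To make your route airtight, do the following.
\begin{itemize}
\item Write the per-row step as $H(M(\ell)\mid\mathcal F_\ell)\le \mathbb E[\log(m_\ell/P_{\ell M(\ell)})]$, where $m_\ell$ is the $P_{\ell,\cdot}$-\emph{mass} of the still-available columns. The count of available neighbours, as in your Step~1, gives a Bregman-type bound, not $\Lambda$.
\item Note that, given $M$, the partners $M^{-1}(j)\in T$ of the support columns of row $\ell$ appear in uniform relative order. Embedding this in a uniform order of all support columns makes $m_\ell$ pointwise no larger than in the unrestricted setting.
\item Monotonicity of $\log$ then yields, with $p=P_{\ell,\cdot}$, the bound $-\sum_k p_k\log p_k+\mathbb E_\pi\sum_k p_k\log\sum_{j\ge_\pi k}p_j=\log\sqrt2+\psi(p)-\Lambda(p)$.
\end{itemize}
If, as a literal reading of Step~1 suggests, you used the conditional marginals given $X_S=x$, you would be back to needing the paper's concavity step. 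Your closing ``alternative'' via coordinate merging is unnecessary.

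Your route buys a sharper inequality that keeps the slack:
\[
H(M)-\Phi(P)\le \bigl(H(X_S)-\Phi_S(P)\bigr)+\sum_{\ell\notin S}\bigl(\log\sqrt2-\Lambda(P_{\ell,\cdot})\bigr).
\]
This contains both this lemma and the row-slack decomposition (take $S=\emptyset$). The paper's route is shorter and modular.
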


\begin{proof}
Let $T:=L\setminus S$. By the chain rule for entropy,
\[
    H(M)=H(X_S)+H(M\mid X_S).
\]
Since $X_S$ records the partners of all rows in $S$, the conditional randomness in $M$ after $X_S$ is revealed is exactly the randomness of the remaining weighted matching on the rows $T$. Thus
\[
    H(M\mid X_S)
    =
    \mathbb E_x[H(M_T\mid X_S=x)],
\]
where $x$ ranges over feasible values of $X_S$, and $M_T:=(M(\ell))_{\ell\in T}$.

Fix a feasible value $x$ of $X_S$. Let $R_x\subseteq R$ be the set of right vertices not used by the partial matching $x$. Conditional on $X_S=x$, the remaining matching is distributed according to the Gibbs measure of the residual weighted matrix
\[
    A_x:=A[T\times R_x].
\]
Let $Q^x=(Q^x_{\ell r})$ denote the true marginals of this residual Gibbs measure, extended to all $r\in R$ by setting $Q^x_{\ell r}:=0$ for $r\notin R_x$.

Applying Theorem \ref{lem:bethe_perm_known} to the residual matrix $A_x$ and evaluating the Bethe objective at its true marginals gives
\[
    H(M_T\mid X_S=x)
    \le
    |T|\log\sqrt2
    +
    \sum_{\ell\in T}\sum_{r\in R}\phi(Q^x_{\ell r}).
\]
Indeed, the residual energy term cancels from the comparison between the logarithm of the residual partition function and the Bethe objective evaluated at $Q^x$. Taking expectation over $x$, we obtain
\begin{equation}
    H(M\mid X_S)
    \le
    |T|\log\sqrt2
    +
    \mathbb E_x
    \left[
        \sum_{\ell\in T}\sum_{r\in R}\phi(Q^x_{\ell r})
    \right].
\end{equation}

Now, observe that for every $\ell\in T$ and $r\in R$, we have $\mathbb E_x Q^x_{\ell r}=P_{\ell r}$ by the law of total expectation. Additionally, because the function $\psi_d$ is concave on the probability simplex (Lemma \ref{clm:psi-concave-simplex}), we have for each fixed $\ell\in T$,
\[
    \mathbb E_x
    \left[
        \sum_{r\in R}\phi(Q^x_{\ell r})
    \right]
    \le
    \sum_{r\in R}\phi(P_{\ell r}).
\]
Summing over $\ell\in T$, we get
\[
    H(M\mid X_S)
    \le
    |T|\log\sqrt2+
    \sum_{\ell\in T}\sum_{r\in R}\phi(P_{\ell r}).
\]
Combining this with the chain rule and subtracting $\Phi(P)$ gives
\[
    H(M)-\Phi(P)
    \le
    \bigl(H(X_S)-\Phi_S(P)\bigr)
    +(n-|S|)\log\sqrt2.
\]
\end{proof}

Putting together the above two ingredients will enable us to prove the $C_4$ structure of the true marginals.

\begin{lemma}
\label{lem:near_extremal_true_c4_structure_quantitative}
Fix $0<\alpha<1$. There exists $\delta_0>0$ such that for every $0<\delta\le \delta_0$, there exists a constant $c=c(\alpha,\delta)>0$ with the following property.

Let $A\in\mathbb R_{\ge0}^{n\times n}$ satisfy $\per(A)>0$. Let $P$ denote the true marginal matrix of the Gibbs measure $\mu_A$ and define $P_\delta$ to be the subgraph of the support of $A$ consisting of all edges $e$ satisfying $|P_e-1/2|\le \delta$.

If
\[
\frac{\operatorname{per}(A)}{\operatorname{per}_B(A)}>
\bigl(\sqrt2-c\bigr)^n,
\]
then the threshold graph \(P_\delta\) has \(C_4\)-components spanning at least
\((1-\alpha)n\) left vertices.
\end{lemma}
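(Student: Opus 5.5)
Our plan is to combine Lemma~\ref{lem:true_two_point_quantitative} with the block-exposure inequality of Lemma~\ref{lem:row_block_exposure}. We argue by contrapositive. Suppose the $C_4$-components of $P_\delta$ span fewer than $(1-\alpha)n$ left vertices. First we apply Lemma~\ref{lem:true_two_point_quantitative} with $\kappa=\alpha/10$. Either the ratio is already at most $(\sqrt2-c_1)^n$, or all but $\alpha n/10$ rows are $\delta$-two point. We fix $\delta<1/6$. Then a $\delta$-two point row has exactly two entries near $1/2$, and all its other entries sum to at most $2\delta$. The same holds for every column, since $P$ is doubly stochastic and any entry in $[1/2-\delta,1/2+\delta]$ leaves at most $1/2+\delta$ for the rest of its column. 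Hence every vertex has degree at most $2$ in $P_\delta$. So $P_\delta$ is a disjoint union of paths and even cycles, and every two-point row has degree exactly $2$. Consequently, at least $(9/10)\alpha n$ left vertices are two-point rows lying in components that are paths or cycles of length $\ge 6$.

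Next we extract a linear number of vertex-disjoint local windows $S\subseteq L$ of bounded size, say $|S|\le 3$. We pick them inside these bad components, with each window consisting of consecutive two-point rows along a long cycle or near a path end. Because the windows are disjoint in rows and we can thin them to a constant fraction, we apply Lemma~\ref{lem:row_block_exposure} iteratively, or with a product of windows, to get
\[
H(M)-\Phi(P)\le \sum_{S}\bigl(H(X_S)-\Phi_S(P)\bigr)+(n-\textstyle\sum|S|)\log\sqrt2 .
\]
To iterate, we expose the windows one at a time, using the chain rule and subadditivity $H(X_{S_1\cup S_2\cup\cdots})\le\sum H(X_{S_i})$. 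The rows not in any window are handled by the residual argument in the proof of Lemma~\ref{lem:row_block_exposure}. It therefore suffices to show that each window satisfies $H(X_S)-\Phi_S(P)\le |S|\log\sqrt2-\eta$ for a constant $\eta=\eta(\delta)>0$. Summing over the $\Omega(\alpha n)$ windows then gives the linear loss, and Observation~\ref{obs:true-marginal-entropy-gap} yields the bound $(\sqrt2-c)^n$.

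The main obstacle is this local window estimate. We handle it by passing to a limit $\delta\to0$ and using compactness. Take two consecutive rows $\ell_1,\ell_2$ sharing a right vertex $r$, where $\ell_1$ uses $\{r_0,r\}$ and $\ell_2$ uses $\{r,r_2\}$ with $r_0\ne r_2$ (this is what "not a $C_4$" means). The pair $X_S=(M(\ell_1),M(\ell_2))$ is then supported up to $O(\delta)$ mass on $\{(r_0,r),(r,r_2),(r_0,r_2)\}$, with marginals near $1/2$. This forces probabilities near $(1/2,1/2,0)$ up to $O(\delta)$, since $(r,r)$ is impossible. So $H(X_S)\approx\log 2$. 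Meanwhile $\Phi_S(P)\approx 2\cdot 2\phi(1/2)=-2\log 2\cdot 0$; a careful look shows $\phi(1/2)=0$, so $\Phi_S\approx 0$. This gives slack $\approx 2\log\sqrt2-\log 2=0$, which is no gain. So pairs are not enough. We instead use windows of three consecutive rows along a path, or one path endpoint, where the joint law is forced to be nearly a single two-orientation structure, with $H\approx\log 2$ versus budget $3\log\sqrt2$, a gain of $\tfrac12\log 2$. The off-window mass of $O(\delta)$ perturbs both $H(X_S)$ and $\Phi_S$ by $O(\delta\log(1/\delta))$, as in Claim~\ref{clm:near_closed_entropy}. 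Choosing $\delta_0$ small so this is below $\tfrac14\log 2$ finishes the argument. We also need to check that in a cycle of length $\ge6$ or a path, three consecutive rows indeed force only two near-$1/2$ configurations. This holds because alternating consistency propagates along the chain, as the constraint that a right vertex is used once pins the choices. Verifying this propagation with $O(\delta)$ errors is the step we expect to need the most care.
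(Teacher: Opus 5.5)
There is a genuine gap. Your overall plan parallels the paper's. Lemma~\ref{lem:true_two_point_quantitative} plus the degree bound makes $P_\delta$ a union of paths and cycles. Lemma~\ref{lem:row_block_exposure}, applied to a union of row sets with subadditivity, reduces the problem to a local entropy count. The gain is that a non-$C_4$ structure carries about one bit against a budget of $\log\sqrt2$ per row. Your three-row windows are a localized version of the paper's bound of $\log 2$ per cycle component with at least three left vertices.

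\textbf{The local window estimate is argued incorrectly.} It is false that off-window mass perturbs $H(X_S)$ and $\Phi_S(P)$ by $O(\delta\log(1/\delta))$. The variable $X_S$ takes values in $R^{|S|}$. If a window row spreads its leftover mass $\approx 2\delta$ evenly over $n$ columns, the entropy of its destination is about $2\delta\log(n/2\delta)$. The off-window part of $\Phi_S(P)$ moves by the same unbounded amount. For the same reason there is no compact space on which to run a $\delta\to0$ compactness argument. What rescues the estimate is that these two unbounded terms cancel. You need to expose two things separately:
\begin{enumerate}
\item which window rows leave the $P_\delta$-structure, at cost $O(h_2(2\delta))$ per row;
\item where they go, at cost at most $\sum_{r:\,(\ell,r)\notin P_\delta}-P_{\ell r}\log P_{\ell r}$.
\end{enumerate}
Then charge the second cost to the off-structure part of $\Phi_S(P)$ using $-x\log x-\phi(x)=-(1-x)\log(1-x)\le x$. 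This pairing is the crux of the paper's proof, and your proposal does not contain it. By contrast, the propagation step you flag as delicate is routine: $\Pr[M(\ell_1)=r_1]\approx\frac12$ and $\Pr[M(\ell_3)=r_2]\approx\frac12$ each pin down one in-structure configuration with probability $\frac12-O(\delta)$, leaving $O(\delta)$ for the rest.

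\textbf{You never show that a linear number of windows exists.} Two-point rows in short path components such as $r_0-\ell-r_1$ or $r_0-\ell_1-r_1-\ell_2-r_2$ admit no three consecutive two-point rows. Singletons and pairs give no gain, as your own pair computation shows. So all of your $\frac{9}{10}\alpha n$ rows could sit in such components, unless you bound the number of path components ending at right vertices. That means bounding the number of columns of $P_\delta$-degree less than $2$.

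You applied Lemma~\ref{lem:true_two_point_quantitative} only to rows, and for columns you derived only the degree bound. The paper also applies the lemma to $A^T$, so all but $\eta n$ columns are two-point, and it charges each imbalanced path to such an exceptional column.

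Alternatively, you can count mass. Each column of $P_\delta$-degree at most $1$ carries at least $\frac12-\delta$ off-$P_\delta$ mass. The total off-$P_\delta$ mass is at most $2\delta n$ plus the number of non-two-point rows. With $\delta$ small relative to $\alpha$, this bounds the number of short paths. Some such argument is needed before your window count goes through.
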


\begin{proof}
Let $M\sim\mu_A$. For $\ell\in L$, write $M(\ell)$ for the right vertex matched to $\ell$. Additionally, let $h_2(p):=-p\log p-(1-p)\log(1-p)$ denote the binary entropy function, and define
\[
    \omega(\delta)
    :=
    \sup_{|x-1/2|\le\delta}|\phi(x)|.
\]
Observe that since $\phi(1/2)=0$, we have $\omega(\delta)\to0$ as $\delta\to0$. Additionally, set $\rho(\delta):= 2h_2(2\delta)+2\delta+2\omega(\delta)$ and choose $\delta_0>0$ sufficiently small so that, for every $0<\delta\le\delta_0$,
\[
    \delta<\frac16
    \qquad\text{and}\qquad
    \gamma_\delta
    :=
    \frac16\log2-\rho(\delta)
    >0.
\]
Fix such a $\delta$, and set
\[
    C_\delta
    :=
    2h_2(2\delta)+2\log2+1+2\omega(\delta).
\]
Additionally, choose $0<\eta<1$ small enough so that $\eta\le \frac{\alpha}{4}$ and $C_\delta\eta\le \frac14\gamma_\delta\alpha$. Let $c_{\mathrm{tp}}=c_{\mathrm{tp}}(\delta,\eta)>0$ be the constant given by Lemma \ref{lem:true_two_point_quantitative}, applied with parameters $\delta$ and $\eta$. Finally define
\[
    c_{\mathrm{str}}
    :=
    \sqrt2\left(1-e^{-\gamma_\delta\alpha/2}\right)>0,
\]
and set $c:=\min\{c_{\mathrm{tp}},c_{\mathrm{str}}\}$.

Now suppose
\[
    \frac{\operatorname{per}(A)}{\operatorname{per}_B(A)}
    >
    \bigl(\sqrt2-c\bigr)^n.
\]
Since $c\le c_{\mathrm{tp}}$, Lemma~\ref{lem:true_two_point_quantitative} implies that all but at most $\eta n$ left vertices are $\delta$-two point. Applying the same lemma to $A^T$ gives the analogous statement for right vertices.

Hence there exist sets $L_{\mathrm{bad}}\subseteq L$ and $R_{\mathrm{bad}}\subseteq R$ such that $|L_{\mathrm{bad}}|\le \eta n$, $|R_{\mathrm{bad}}|\le \eta n$, and every vertex in $(L\setminus L_{\mathrm{bad}})\cup(R\setminus R_{\mathrm{bad}})$ is $\delta$-two point. For every vertex outside these exceptional sets, there are at least two incident edges in $P_\delta$. On the other hand, $P_\delta$ has maximum degree at most $2$. Indeed, if some vertex were incident to three edges of $P_\delta$, then the sum of the corresponding true marginals would be at least $3(1/2-\delta)>1$, a contradiction. Therefore every non-exceptional vertex has exactly two incident edges in $P_\delta$, and every connected component of $P_\delta$ is a path or a cycle.

Let $\mathcal H$ be the collection of connected components of $P_\delta$ which are not $C_4$'s. Enumerate them as $\mathcal H=\{H_1,\dots,H_t\}$. For each $i$, write
\[
    L_i:=L(H_i),
    \qquad
    R_i:=R(H_i),
    \qquad
    F_i:=E(H_i),
\]
and write $S:=\bigcup_{i=1}^t L_i$ and $m:=|S|=\sum_{i=1}^t |L_i|$. We will prove that if $m$ is a positive fraction of $n$, then the Bethe ratio is bounded away from $(\sqrt2)^n$.

We will do this by bounding the entropy of the matching distribution on long cycles and paths. To do this, expose the random vector $X_S:=(M(\ell))_{\ell\in S}$ as follows.
\begin{enumerate}
    \item For each $i\in[t]$, reveal
    \[
        A_i:=\{\ell\in L_i:\ (\ell,M(\ell))\notin F_i\},
    \]
    the set of rows in $H_i$ whose matching edge is not an edge of the $P_\delta$-component $H_i$. Also reveal
    \[
        B_i:=\{r\in R_i:\ (M^{-1}(r),r)\notin F_i\},
    \]
    the set of columns in $H_i$ whose matching edge is not an edge of $H_i$.
    \item For each $i\in[t]$, reveal the actual destinations of the exceptional rows: $Y_i:=(M(\ell))_{\ell\in A_i}$.
    \item Lastly, for each $i\in[t]$, reveal the remaining internal matching inside $H_i$, denoted by $Z_i$.
\end{enumerate}

The collection $((A_i,B_i,Y_i,Z_i))_{i=1}^t$ determines $X_S$. Hence, by the chain rule and entropy subadditivity,
\[
\begin{aligned}
    H(X_S)
    &\le
    H((A_i,B_i)_{i=1}^t)
    +
    H((Y_i)_{i=1}^t\mid (A_i,B_i)_{i=1}^t) \\
    &\qquad
    +
    H((Z_i)_{i=1}^t\mid (A_i,B_i,Y_i)_{i=1}^t).
\end{aligned}
\]

We now bound these three terms. Consider the entropy of the exceptional sets. If $\ell\in L_i\setminus L_{\mathrm{bad}}$, then the two $P_\delta$-edges of $H_i$ incident to $\ell$ both have marginal at least $1/2-\delta$. Therefore
\[
    \mathbb P(\ell\in A_i)
    \le
    1-2\left(\frac12-\delta\right)
    =2\delta.
\]
Similarly, if $r\in R_i\setminus R_{\mathrm{bad}}$, then $\mathbb P(r\in B_i)\le 2\delta$. Thus, by entropy subadditivity,
\[
\begin{aligned}
    H((A_i,B_i)_{i=1}^t)
    &\le
    h_2(2\delta)
    \left(
        \sum_{i=1}^t |L_i|+
        \sum_{i=1}^t |R_i|
    \right)
    +
    \log2(|L_{\mathrm{bad}}|+|R_{\mathrm{bad}}|).
\end{aligned}
\]
Since every component of $P_\delta$ is a path or a cycle, $||L_i|-|R_i||\le1$ for every $i$. If $|L_i|\ne |R_i|$, then $H_i$ is a path whose larger side contains an endpoint. That endpoint has degree at most $1$ in $P_\delta$, and therefore cannot be $\delta$-two point. Hence each imbalanced component can be charged to a vertex of $L_{\mathrm{bad}}\cup R_{\mathrm{bad}}$. Consequently,
\[
    \sum_{i=1}^t ||L_i|-|R_i||
    \le
    |L_{\mathrm{bad}}|+|R_{\mathrm{bad}}|
    \le
    2\eta n.
\]
In particular, $\sum_i |R_i|\le m+2\eta n$, and therefore
\begin{equation} \label{eq:weighted_AiBi_entropy}
    H((A_i,B_i)_{i=1}^t)
    \le
    2m h_2(2\delta)
    +
    (2h_2(2\delta)+2\log2)\eta n.
\end{equation}

Next, we bound the entropy of the exceptional destinations. For $\ell\in L_i$, set
\[
    q_\ell:=\sum_{r:\,(\ell,r)\notin F_i}P_{\ell r}.
\]
Let $W_\ell$ be the random variable equal to $M(\ell)$ if $\ell\in A_i$, and equal to a dummy symbol $*$ otherwise. Then $(Y_i)_{i=1}^t$ is determined by $(W_\ell)_{\ell\in S}$ together with the sets $A_i$. Hence
\[
\begin{aligned}
    H((Y_i)_{i=1}^t\mid (A_i,B_i)_{i=1}^t)
    &\le
    \sum_{i=1}^t\sum_{\ell\in L_i}
        H(W_\ell\mid \mathbf 1_{\{\ell\in A_i\}}) \\
    &\le
    \sum_{i=1}^t
    \sum_{\ell\in L_i}
    \sum_{r:\,(\ell,r)\notin F_i}
        -P_{\ell r}\log P_{\ell r}.
\end{aligned}
\]

Finally, bound the entropy of the internal choices $Z_i$. Once $A_i,B_i,Y_i$ have been revealed, the remaining vertices of $H_i$ must be matched using only edges of $H_i$. If $H_i$ is a path, deleting vertices leaves a disjoint union of paths, and each path has at most one perfect matching. Thus a path component contributes at most one internal choice. If $H_i$ is a cycle, then either some vertex has been deleted, leaving a disjoint union of paths, or no vertex has been deleted, in which case the cycle has exactly two alternating perfect matchings. Therefore a cycle component contributes at most \(\log 2\) to the entropy, while a path
component contributes zero.

Let $c_{\mathrm{cyc}}$ be the number of cycle components in $\mathcal H$. Since the $C_4$-components were excluded, every cycle in $\mathcal H$ has at least $3$ left vertices. Hence $c_{\mathrm{cyc}}\le m/3$. Consequently,
\begin{equation} \label{eq:weighted_internal_entropy}
    H((Z_i)_{i=1}^t\mid (A_i,B_i,Y_i)_{i=1}^t)
    \le
    \frac{m}{3}\log2.
\end{equation}

Combining the three entropy bounds gives
\[
\begin{aligned}
H(X_S)
&\le
2m h_2(2\delta)
+
\sum_{i=1}^t
\sum_{\ell\in L_i}
\sum_{r:\,(\ell,r)\notin F_i}
    -P_{\ell r}\log P_{\ell r}
+
\frac{m}{3}\log 2 \\
&\qquad
+
(2h_2(2\delta)+2\log 2)\eta n .
\end{aligned}
\]

We now compare the entropy with the Bethe contribution from the same rows. Define
\[
\Phi_S(P)=
    \sum_{i=1}^t
    \sum_{\ell\in L_i}
    \sum_{r\in R}
        \phi(P_{\ell r}).
\]
By Lemma \ref{lem:row_block_exposure},
\[
\begin{aligned}
H(M)-\Phi(P)
&\le
(n-m)\log\sqrt2
+
2m h_2(2\delta)
+
\frac{m}{3}\log 2
+
(2h_2(2\delta)+2\log 2)\eta n \\
&\qquad
+
\sum_{i=1}^t
\sum_{\ell\in L_i}
\sum_{r:\,(\ell,r)\notin F_i}
\left(
    -P_{\ell r}\log P_{\ell r}-\phi(P_{\ell r})
\right)
-
\sum_{i=1}^t\sum_{e\in F_i}\phi(P_e) \\
&\le
(n-m)\log\sqrt2
+
2m h_2(2\delta)
+
\frac{m}{3}\log 2
+
(2h_2(2\delta)+2\log 2)\eta n \\
&\qquad
+
\sum_{i=1}^t
\sum_{\ell\in L_i}
\sum_{r:\,(\ell,r)\notin F_i}
    P_{\ell r}
+
\omega(\delta)\sum_{i=1}^t |F_i|,
\end{aligned}
\]
where the last inequality follows from $-P_{\ell r}\log P_{\ell r}-\phi(P_{\ell r})\le P_{\ell r}$ and from $|P_e-1/2|\le\delta$ for every internal edge $e\in F_i$.

Rows in $L_i\setminus L_{\mathrm{bad}}$ contribute at most $2\delta$, while rows in $L_{\mathrm{bad}}$ contribute at most their total row mass. Since $|L_{\mathrm{bad}}|\le \eta n$,
\[
    \sum_{i=1}^t
    \sum_{\ell\in L_i}
    \sum_{r:\,(\ell,r)\notin F_i}
        P_{\ell r}
    \le
    2\delta m+
    \eta n.
\]
Also, $|F_i|\le |L_i|+|R_i|$ and $\sum_i |R_i|\le m+2\eta n$, so
\[
    -\sum_{i=1}^t\sum_{e\in F_i}\phi(P_e)
    \le
    \omega(\delta)(2m+2\eta n).
\]
Plugging these bounds in, we get
\[
\begin{aligned}
H(M)-\Phi(P)
&\le
(n-m)\log\sqrt2
+
\frac{m}{3}\log2
+
(2h_2(2\delta)+2\delta+2\omega(\delta))m \\
&\qquad
+
(2h_2(2\delta)+2\log2+1+2\omega(\delta))\eta n \\
&=
    n\log\sqrt2
    -\gamma_\delta m
    +C_\delta\eta n.
\end{aligned}
\]
Using $\log\operatorname{per}(A)-\beta(A,P)=H(M)-\Phi(P)$ and $\log\operatorname{per}_B(A)\ge \beta(A,P)$, this implies
\begin{equation} \label{eq:weighted_ratio_bound}
    \log\frac{\operatorname{per}(A)}{\operatorname{per}_B(A)}
    \le
    n\log\sqrt2-
    \gamma_\delta m+
    C_\delta\eta n.
\end{equation}

Let $\mathcal C$ be the union of the $C_4$-components of $P_\delta$. Every left vertex not spanned by $\mathcal C$ is either in $S$ or in $L_{\mathrm{bad}}$. Therefore the number of left vertices not spanned by $\mathcal C$ is at most $m+|L_{\mathrm{bad}}|\le m+\eta n$.

Now suppose, toward contradiction, that $\mathcal C$ spans fewer than $(1-\alpha)n$ left vertices. Then $m+\eta n>\alpha n$, and hence, using $\eta\le \alpha/4$, we have $m>(\alpha-\eta)n\ge \frac34\alpha n$. Plugging this into \eqref{eq:weighted_ratio_bound} gives
\[
    \log\frac{\operatorname{per}(A)}{\operatorname{per}_B(A)}
    \le
    n\log\sqrt2
    -
    \frac12\gamma_\delta\alpha n,
\]
where we used $C_\delta\eta\le \gamma_\delta\alpha/4$. Exponentiating, we get
\[
    \frac{\operatorname{per}(A)}{\operatorname{per}_B(A)}
    \le
    \left(\sqrt2\,e^{-\gamma_\delta\alpha/2}\right)^n
    =
    \bigl(\sqrt2-c_{\mathrm{str}}\bigr)^n,
\]
a contradiction since $c\le c_{\mathrm{str}}$.
\end{proof}

Now that we have characterized the behavior of the true marginals under the worst-case ratio, all that remains is showing that this same behavior transfers to weighted low-leakage blocks for the Bethe optimizer and that the corresponding weighted orientation products are nearly balanced. Such a structure is described by the following definition.

\begin{definition} \label{def:true-rho-tight-c4}
Let \(P\) be the true marginal matrix, and let \(P_\rho\) be the threshold graph
consisting of all edges \(e\) with \(|P_e-1/2|\le \rho\). A block
\[
    Q=(\{s_1,s_2\},\{r_1,r_2\})
\]
is called a \emph{true \(\rho\)-tight \(C_4\)-component} if the four edges
$
    (s_1,r_1), (s_1,r_2), (s_2,r_1), (s_2,r_2)
$
form a connected component of \(P_\rho\). In particular, each internal edge
of \(Q\) has true marginal in \([1/2-\rho,1/2+\rho]\), and for every
\(i,j\in\{1,2\}\),
\[
    \sum_{r\notin\{r_1,r_2\}}P_{s_i r}\le 2\rho,
    \qquad
    \sum_{s\notin\{s_1,s_2\}}P_{s r_j}\le 2\rho.
\]
\end{definition}

The following Lemma shows that when $\rho$ is sufficiently small, such tight $C_4$'s can be detected by the Bethe permanent.

\begin{lemma}
\label{lem:true-c4-implies-central-bethe-c4}
For every $0<\lambda<1/10$ and every $\zeta>0$, there exist constants $\rho=\rho(\lambda,\zeta)>0$ and $\varepsilon_{\mathrm{tr}}=\varepsilon_{\mathrm{tr}}(\lambda,\zeta)>0$ such that the following holds.

Let $A\in\mathbb R_{\ge0}^{n\times n}$ satisfy $\operatorname{per}(A)>0$, with true marginal matrix $P$ and Bethe optimizer $P^*$. Suppose
\[
    \log\frac{\operatorname{per}(A)}{\operatorname{per}_B(A)}
    \ge
    n\log\sqrt2-
    \varepsilon_{\mathrm{tr}}n.
\]
Let $\mathcal Q$ be a row-disjoint collection of true $\rho$-tight $C_4$'s for $P$. Then all but at most $\zeta n$ members $Q\in\mathcal Q$ are $\lambda$-admissible for $(A,P^*)$. Moreover every member $Q\in\mathcal Q$ satisfies
\[
    s_Q\ge 2-\lambda.
\]
\end{lemma}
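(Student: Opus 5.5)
The statement has two parts: a balance bound $s_Q\ge 2-\lambda$ for every $Q\in\mathcal Q$, and an admissibility count. I would prove them separately. The balance bound is a purely local property of the true Gibbs measure and does not need the near-extremality hypothesis. The admissibility count is where the hypothesis enters, through the concavity of the Bethe objective.

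\emph{Balance.} Fix $Q=(\{s_1,s_2\},\{r_1,r_2\})\in\mathcal Q$. Call a perfect matching $M$ \emph{closed on $Q$} if $M$ matches $\{s_1,s_2\}$ onto $\{r_1,r_2\}$. Flipping the two internal edges is a bijection between closed matchings using the parallel orientation and closed matchings using the cross orientation. This bijection multiplies the weight by exactly $V_Q/U_Q$. Hence $\Pr_{\mu_A}(\text{closed, parallel})/\Pr_{\mu_A}(\text{closed, cross})=U_Q/V_Q$. Next I bound these two probabilities using Definition~\ref{def:true-rho-tight-c4}:
\begin{itemize}[leftmargin=*]
\item $\Pr(\text{closed, parallel})\ge P_{s_1r_1}-\Pr(s_2\text{ leaves }R(Q))\ge \tfrac12-3\rho$;
\item $\Pr(\text{closed, cross})\le P_{s_1r_2}\le\tfrac12+\rho$.
\end{itemize}
The same two bounds hold with the orientations exchanged. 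Therefore $\theta_Q\ge(\tfrac12-3\rho)/(\tfrac12+\rho)$, and choosing $\rho\le\lambda/10$ gives $s_Q\ge 2-\lambda$.

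\emph{Admissibility.} Let $P$ be the true marginal matrix. By Theorem~\ref{lem:bethe_perm_known} we have $\log\operatorname{per}(A)\le n\log\sqrt2+\beta(A,P)$. Combined with the hypothesis, this gives $\beta(A,P^*)-\beta(A,P)\le\varepsilon_{\mathrm{tr}}n$. Put $m=(P+P^*)/2\in\mathcal P(A)$. The energy term is linear, so
\[
\beta(A,P^*)\ \ge\ \beta(A,m)\ =\ \tfrac12\beta(A,P)+\tfrac12\beta(A,P^*)+\sum_{\ell}G_\ell,
\]
where $G_\ell:=\psi(m_{\ell,\cdot})-\tfrac12\psi(P_{\ell,\cdot})-\tfrac12\psi(P^*_{\ell,\cdot})\ge0$ by Lemma~\ref{clm:psi-concave-simplex}. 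It follows that $\sum_\ell G_\ell\le\varepsilon_{\mathrm{tr}}n/2$. The Bethe objective is symmetric under transposition, so the same argument with columns in place of rows gives the identical bound on the column gaps.

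Now suppose $Q\in\mathcal Q$ is not $\lambda$-admissible. Then one of its four lines, say row $s_1$, has $P^*$-mass at least $\lambda/4$ outside the two block coordinates. For that same row, $P$ has both block coordinates in $[\tfrac12-\rho,\tfrac12+\rho]$ and total outside mass at most $2\rho$. The key estimate I need is a uniform positive lower bound $G\ge g(\lambda)>0$ for all such pairs of probability vectors, in every dimension $d$, once $\rho\le\rho(\lambda)$. Given this estimate, the counting is immediate. The blocks in $\mathcal Q$ are row-disjoint components of $P_\rho$, so their rows and their columns are pairwise disjoint. Hence at most $\varepsilon_{\mathrm{tr}}n/g$ members of $\mathcal Q$ fail to be admissible, and setting $\varepsilon_{\mathrm{tr}}:=\zeta g(\lambda)$ finishes the proof.

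\emph{Main obstacle: the uniform gap $g(\lambda)$.} This is the step I expect to be hardest, because $\phi''(t)=-1/t+1/(1-t)$. So $\phi$ is strongly concave only near $0$, is convex on $(1/2,1)$, and the simplex dimension is unbounded. I would decompose the gap coordinate by coordinate.
\begin{itemize}[leftmargin=*]
\item \emph{Outside coordinates.} Here $p_i\le2\rho$, and these coordinates carry $q$-mass at least $\lambda/4$. On each such coordinate the midpoint gain of $\phi$ is at least the midpoint gain of the strictly concave function $-t\log t$, minus a second-order error from the $(1-t)\log(1-t)$ part. The $-t\log t$ gain is at least roughly $\tfrac12 q_i\log\bigl(q_i/(q_i+p_i)\bigr)^{-1}$-type amounts, which sum to $\gtrsim\lambda\log(1/\rho)$-order quantities.
\item \emph{Block coordinates.} These are the only places where $\phi$ can be convex (at most one coordinate of a probability vector exceeds $1/2$). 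The loss there is at most $O\bigl(|p_i-q_i|^2\bigr)$ times $\sup\phi''$ on the relevant range. Since the $p_i$ are near $\tfrac12$, where $\phi''\approx0$, this loss is small compared with the gain above when $\rho$ is small.
\end{itemize}
If this direct comparison becomes messy, my fallback is the merging trick from the proof of Lemma~\ref{lem:true_two_point_quantitative}: lump the outside coordinates together and reduce to a compact family of $\Delta_3$ instances. The gap would then follow from strict concavity of $\psi_3$ off the face where the two vectors agree, plus a compactness argument. The reduction needs checking that lumping does not increase $G$.
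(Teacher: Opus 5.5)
Your skeleton is the paper's and is correct. The flip-bijection balance bound is Claim~\ref{clm:true_tight_balanced_weights}, the budget $\sum_\ell G_\ell\le\varepsilon_{\mathrm{tr}}n/2$ is Claim~\ref{clm:true-marginals-near-bethe-quantitative}, and the counting works. Charging columns to column Jensen gaps is a valid variant, because the total Jensen gap of $\Phi$ is the same whether you group by rows or by columns. The paper uses only row gaps and gets the column leakage from double stochasticity, since the two column leakages sum to the two row leakages.

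\textbf{The gap: your primary argument for the uniform bound $g(\lambda)$ does not work.} The single-coordinate terms $\phi(\tfrac{p_i+q_i}2)-\tfrac12\phi(p_i)-\tfrac12\phi(q_i)$ have no sign, and the gain need not sit on the outside coordinates at all.
\begin{enumerate}
\item Take $p=(\tfrac12,\tfrac12,0,\dots)$ and $q=e_3$. The outside term is $\phi(\tfrac12)-\tfrac12\phi(0)-\tfrac12\phi(1)=0$, and it is strictly negative once $p_3=2\rho>0$. Its segment $[0,1]$ also runs through the convex region of $\phi$, contrary to your claim that only block coordinates can see convexity. The entire gap $2\phi(\tfrac14)>0$ comes from the two block coordinates.
\item Your claimed $\lambda\log(1/\rho)$ order is wrong. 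The $-t\log t$ part contributes at most $\tfrac{\log2}{2}(p_i+q_i)$ per coordinate, so there is no $\log(1/\rho)$ factor, and the convex part $(1-t)\log(1-t)$ can cancel it entirely.
\item The ``small loss'' estimate on block coordinates also fails. The entry $q_1$ may be near $1$, where $\phi''(t)=\tfrac{2t-1}{t(1-t)}\to\infty$. For example, $q=(0.9,0,0.1)$ gives a block term $\approx-0.044$ against an outside gain $\approx0.033$, even at $\rho=0$.
\end{enumerate}

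\textbf{Your fallback is the paper's route, but the two steps you leave open are the substance of the lemma} (Claims~\ref{clm:merging-does-not-increase-gap} and~\ref{claim:row-leakage-jensen-gap}).

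First, the lumping monotonicity is not the Anari--Rezaei merging lemma from Lemma~\ref{lem:true_two_point_quantitative}. That lemma concerns the slack $\Lambda$ and needs $a+b\le\tfrac12$. Here the argument is different. Since $\phi(1-x)=-\phi(x)$, the change from merging coordinates $u,v$ is
\[
D(u,v)=\phi(u)+\phi(v)-\phi(u+v)=\psi_3(u,v,1-u-v),
\]
which is concave by Lemma~\ref{clm:psi-concave-simplex}. Hence the midpoint gap can only decrease under merging, with no restriction on the sizes.

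Second, $\psi_3$ is not strictly concave: it vanishes identically on each edge of $\Delta_3$, for instance $\psi_3(x,1-x,0)=0$, and $Tp$ lies on or next to such an edge. So ``strict concavity'' has to be replaced by a direct argument. Along the segment from $(\tfrac12,\tfrac12,0)$ to $v=Tq$ with $v_3\ge\lambda/4$, the concave function $t\mapsto\psi_3$ has infinite right derivative at $t=0$, coming from the term $\phi(tv_3)$. Hence it is not affine, and its midpoint gap is strictly positive. Compactness of $\{v\in\Delta_3:v_3\ge\lambda/4\}$ and continuity then give a uniform $\kappa>0$ for every $Tp$ within $O(\rho)$ of $(\tfrac12,\tfrac12,0)$. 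This is the dimension-free $g(\lambda)$ your counting needs.
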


We defer its proof to the next subsection. For now, assuming it is true, we can finish the proof of Theorem \ref{thm:sparse-c4} as follows.

\begin{proof}[Proof of Theorem~\ref{thm:sparse-c4}]
Choose constants $0<\alpha<1/3$ and $0<\zeta<(1-\alpha)/2$. Fix $0<\lambda<1/10$ sufficiently small, to be specified below, and apply Lemma \ref{lem:true-c4-implies-central-bethe-c4} with this choice of parameters $\lambda$ and $\zeta$. Let $\rho_0=\rho_0(\lambda,
\zeta)>0$ and $\varepsilon_{\mathrm{tr}}=\varepsilon_{\mathrm{tr}}(\lambda,\zeta)>0$ be the constants from that lemma. Apply Lemma~\ref{lem:near_extremal_true_c4_structure_quantitative} with the above fixed value of $\alpha$.

Choose $0<\rho\le \rho_0$ small enough so that Lemma~\ref{lem:near_extremal_true_c4_structure_quantitative} applies with tolerance $\rho$ and additionally $2\rho<1/2-\rho_0$. Let $c_{\mathrm{str}}=c_{\mathrm{str}}(\alpha,\rho)>0$ be the corresponding constant. Thus, if
\[
    \frac{\operatorname{per}(A)}{\operatorname{per}_B(A)}
    >
    (\sqrt2-c_{\mathrm{str}})^n,
\]
then the true marginal graph $P_\rho$ contains a vertex-disjoint union of $C_4$'s spanning at least $(1-\alpha)n$ left vertices.

Set
\[
    c_0:=
    \frac14\left(\frac{1-\alpha}{2}-\zeta\right)
    \log(2-\lambda)>0
\]
and choose any $0<\tau<c_0$. Finally choose $\varepsilon>0$ small enough that $0<\varepsilon<c_{\mathrm{str}}$ and $    \log\frac{\sqrt2}{\sqrt2-\varepsilon}
    \le
    \varepsilon_{\mathrm{tr}}.$
We claim that this choice proves the theorem. Suppose, toward contradiction, that $\Gamma_\lambda(A,P^*)<\tau n$ but
\[
    \operatorname{per}(A)>
    (\sqrt2-\varepsilon)^n\operatorname{per}_B(A).
\]
Since $\varepsilon<c_{\mathrm{str}}$, Lemma~\ref{lem:near_extremal_true_c4_structure_quantitative} gives a vertex-disjoint collection $\mathcal Q$ of true $\rho$-tight $C_4$'s spanning at least $(1-\alpha)n$ left vertices, i.e. $    |\mathcal Q|\ge \frac{1-\alpha}{2}n.$
Moreover,
\[
\begin{aligned}
    \log\frac{\operatorname{per}(A)}{\operatorname{per}_B(A)}
    &>
    n\log(\sqrt2-\varepsilon) \\
    &=
    n\log\sqrt2
    -
    n\log\frac{\sqrt2}{\sqrt2-\varepsilon} \\
    &\ge
    n\log\sqrt2-
    \varepsilon_{\mathrm{tr}}n.
\end{aligned}
\]
We claim that every true $\rho$-tight $C_4$-component is also a true $\rho_0$-tight $C_4$-component. Indeed, the four internal edges are within $\rho\le \rho_0$ of $1/2$. Moreover, for every vertex of the block, the total marginal mass leaving the block is at most $2\rho$. Hence every external incident edge has marginal at most $2\rho<1/2-\rho_0$, and therefore no external incident edge belongs to the threshold graph $P_{\rho_0}$.

Thus Lemma~\ref{lem:true-c4-implies-central-bethe-c4} applies to $\mathcal Q$. All but at most $\zeta n$ members of $\mathcal Q$ are $\lambda$-admissible for $(A,P^*)$, and every member of $\mathcal Q$ has score at least $\log(2-\lambda)$. Therefore the maximum total score of a vertex-disjoint collection of $\lambda$-admissible blocks is at least
\[
    \left(\frac{1-\alpha}{2}-\zeta\right)n\log(2-\lambda).
\]
The greedy algorithm that defines $\Gamma_\lambda(A,P^*)$ obtains at least a $1/4$ fraction of the optimum score: each selected block uses four vertices, and a vertex-disjoint optimum block can be charged to the first greedy block with which it intersects. Hence
\[
    \Gamma_\lambda(A,P^*)
    \ge
    c_0 n
    >
    \tau n,
\]
contradicting the assumption $\Gamma_\lambda(A,P^*)<\tau n$.
\end{proof}

\subsection{Proof of Lemma \ref{lem:true-c4-implies-central-bethe-c4}}

Recall the notation
\[
\phi(x):=-x\log x+(1-x)\log(1-x),
\qquad
\psi_d(x_1,\dots,x_d):=\sum_{i=1}^d \phi(x_i).
\]
When the dimension is clear, we simply write $\psi$, and recall that by Lemma \ref{clm:psi-concave-simplex}, it is concave.

\begin{claim}
\label{clm:merging-does-not-increase-gap}
Define $T:\Delta_d\to\Delta_3$ by
\[
T(x_1,\dots,x_d)=
\left(x_1,x_2,\sum_{i=3}^d x_i\right).
\]
Then, for all $p,q\in\Delta_d$,
\[
\psi_d\!\left(\frac{p+q}{2}\right)
-\frac12\psi_d(p)
-\frac12\psi_d(q)
\ge
\psi_3\!\left(\frac{Tp+Tq}{2}\right)
-\frac12\psi_3(Tp)
-\frac12\psi_3(Tq).
\]
\end{claim}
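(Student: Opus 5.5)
The plan is to reduce the claim to repeatedly merging two coordinates, and to show that a single merge never increases the concavity gap. For a function $f$ on a simplex, write
\[
\mathrm{Gap}_f(p,q):=f\!\left(\tfrac{p+q}{2}\right)-\tfrac12 f(p)-\tfrac12 f(q).
\]
Since $\psi_d$ is a sum of $\phi$ over coordinates, this gap splits as a sum of one-dimensional gaps,
\[
g(x,y):=\phi\!\left(\tfrac{x+y}{2}\right)-\tfrac12\phi(x)-\tfrac12\phi(y).
\]
Coordinates $1$ and $2$ are unchanged by $T$, so they contribute equally to both sides. It therefore suffices to prove
\[
\sum_{i\ge 3} g(p_i,q_i)\;\ge\; g\Big(\sum_{i\ge3}p_i,\;\sum_{i\ge3}q_i\Big).
\]

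I will prove this by induction on the number of coordinates merged. The base step is the two-coordinate inequality: for $a,b,c,d\ge0$ with $a+b\le1$ and $c+d\le1$,
\[
g(a,c)+g(b,d)\;\ge\; g(a+b,\,c+d).
\]
The induction then merges coordinates $3,\dots,d$ one at a time. At each step the two merged masses have sum at most $1$, because both $p$ and $q$ lie in the simplex. Iterating gives the displayed inequality.

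The key step is the two-coordinate inequality, and the main point is the identity $\phi(1-t)=-\phi(t)$. This holds because
\[
\phi(t)+\phi(1-t)=-t\log t+(1-t)\log(1-t)-(1-t)\log(1-t)+t\log t=0 .
\]
Define $D(x,y):=\phi(x)+\phi(y)-\phi(x+y)$ for $x,y\ge0$ with $x+y\le1$. By the identity,
\[
D(x,y)=\phi(x)+\phi(y)+\phi(1-x-y)=\psi_3(x,y,1-x-y).
\]
The two-coordinate inequality, rearranged, says exactly that
\[
D\!\left(\tfrac{a+c}{2},\tfrac{b+d}{2}\right)-\tfrac12D(a,b)-\tfrac12D(c,d)\;\ge\;0 .
\]
This is the midpoint concavity of $\psi_3$ at the points $(a,b,1-a-b)$ and $(c,d,1-c-d)$ of $\Delta_3$. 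It follows directly from Lemma~\ref{clm:psi-concave-simplex} with $d=3$, since the map $(x,y)\mapsto(x,y,1-x-y)$ is affine.

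The only thing to check carefully is the bookkeeping in the rearrangement: the $\phi(\cdot)$ terms of the merged coordinate must appear with the right signs. Once the identity $\phi(1-t)=-\phi(t)$ is noticed, I do not expect any real obstacle.
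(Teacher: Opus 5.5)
Your proof is correct and follows essentially the same route as the paper: you reduce to a single two-coordinate merge, write the merge loss as $D(u,v)=\phi(u)+\phi(v)-\phi(u+v)=\psi_3(u,v,1-u-v)$ using $\phi(1-t)=-\phi(t)$, and conclude by concavity of $\psi_3$ on $\Delta_3$ (Lemma~\ref{clm:psi-concave-simplex}). The only difference is presentational: you spell out the coordinatewise splitting of the gap and the induction bookkeeping, which the paper leaves as ``iterating.''
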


\begin{proof}
It suffices to prove that merging two coordinates cannot increase the Jensen gap, and the result follows by iterating. Denote the two coordinates being merged in $p$ by $u,v$, and in $q$ the corresponding coordinates by $u',v'$. The loss in $\psi$ from merging these two coordinates is
\[
D(u,v):=\phi(u)+\phi(v)-\phi(u+v)
=
\phi(u)+\phi(v)+\phi(1-u-v)
=
\psi_3(u,v,1-u-v).
\]
Thus $D$ is concave on the triangle $\{(u,v):u\ge0,\ v\ge0,\ u+v\le1\}$ by Lemma \ref{clm:psi-concave-simplex}. Therefore
\[
D\!\left(\frac{u+u'}2,\frac{v+v'}2\right)
\ge
\frac12D(u,v)+\frac12D(u',v').
\]
Equivalently, the Jensen gap before merging is at least the Jensen gap after merging. Iterating gives the claim.
\end{proof}

The next claim also exploits the concavity of the Bethe objective on the simplex.

\begin{claim}
\label{clm:true-marginals-near-bethe-quantitative}
Let $P^*$ be a Bethe optimizer and let $P$ be the true marginal matrix. Suppose that, for some $\varepsilon_{\rm tr}>0$,
\[
\log\frac{\operatorname{per}(A)}{\operatorname{per}_B(A)}
\ge
n\log\sqrt2-
\varepsilon_{\rm tr}n.
\]
Then
\begin{enumerate}
    \item
    $
    0
    \le
    \log \operatorname{per}_B(A;P^*)
    -
    \log \operatorname{per}_B(A;P)
    \le
    \varepsilon_{\rm tr} n.
    $
    \item
    $
    \sum_{\ell\in L}
    \left[
    \psi\!\left(\frac{P_{\ell,\cdot}+P^*_{\ell,\cdot}}{2}\right)
    -
    \frac12 \psi(P_{\ell,\cdot})
    -
    \frac12 \psi(P^*_{\ell,\cdot})
    \right]
    \le
    \frac{\varepsilon_{\rm tr}}{2}n .
    $
\end{enumerate}
\end{claim}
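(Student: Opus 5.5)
Part 1 follows from optimality and the upper bound in Theorem~\ref{lem:bethe_perm_known}. Since $P^*$ maximizes $\beta(A,\cdot)$ over $\mathcal P(A)$ and $P\in\mathcal P(A)$, we get $\beta(A,P^*)\ge\beta(A,P)$, which is the left inequality. For the right inequality we combine two facts. First, Observation~\ref{obs:true-marginal-entropy-gap} together with Theorem~\ref{lem:bethe_perm_known} gives
\[
\log\operatorname{per}(A)\le n\log\sqrt2+\beta(A,P).
\]
Second, the hypothesis gives
\[
\log\operatorname{per}(A)\ge n\log\sqrt2-\varepsilon_{\rm tr}n+\beta(A,P^*).
\]
Subtracting the second from the first yields $\beta(A,P^*)-\beta(A,P)\le\varepsilon_{\rm tr}n$.

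For part 2, the plan is to use the midpoint $\bar P:=\tfrac12(P+P^*)$. This matrix lies in $\mathcal P(A)$ because the polytope is convex and both matrices are supported on the positive entries of $A$. Hence $\beta(A,\bar P)\le\beta(A,P^*)$ by optimality of $P^*$.

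Now expand $\beta=\mathcal E_A+\Phi$. Since $\mathcal E_A$ is linear, we have $\mathcal E_A(\bar P)=\tfrac12\mathcal E_A(P)+\tfrac12\mathcal E_A(P^*)$. The entropy term splits row by row as $\Phi(X)=\sum_\ell\psi(X_{\ell,\cdot})$. Therefore
\[
\beta(A,\bar P)-\tfrac12\beta(A,P)-\tfrac12\beta(A,P^*)=\sum_{\ell\in L}J_\ell,
\]
where $J_\ell$ is exactly the row-wise Jensen gap appearing in the claim. By Lemma~\ref{clm:psi-concave-simplex}, each $J_\ell$ is nonnegative.

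The upper bound comes from the chain
\[
\sum_\ell J_\ell\le\beta(A,P^*)-\tfrac12\beta(A,P)-\tfrac12\beta(A,P^*)=\tfrac12\bigl(\beta(A,P^*)-\beta(A,P)\bigr)\le\tfrac{\varepsilon_{\rm tr}}2n.
\]
The first inequality uses $\beta(A,\bar P)\le\beta(A,P^*)$, and the last uses part 1.

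The only delicate point is finiteness. We need $\beta(A,P)>-\infty$ so that the subtractions are legitimate. This holds because $P$ is supported on positive entries of $A$ and $\phi$ is bounded on $[0,1]$. Beyond that there is no real obstacle: the argument is the standard fact that a near-optimal feasible point has a small Jensen gap with the maximizer of a concave objective.
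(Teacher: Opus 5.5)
Your proposal is correct and follows essentially the same route as the paper. Part 1 combines optimality of $P^*$ with the middle inequality $\operatorname{per}(A)\le(\sqrt2)^n\operatorname{per}_B(A;P)$ of Theorem~\ref{lem:bethe_perm_known}, and Part 2 sandwiches $\beta(A,\bar P)$ between the concavity lower bound and the optimality upper bound, with the linear energy term cancelling so that only the row-wise Jensen gaps remain.
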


\begin{proof}
Since $P^*$ is a Bethe optimizer,
\[
\operatorname{per}_B(A;P)\le \operatorname{per}_B(A;P^*)=\operatorname{per}_B(A).
\]
Therefore the difference in the first item is nonnegative. Also, by Theorem \ref{lem:bethe_perm_known},
\[
\log \operatorname{per}(A)
\le
n\log\sqrt2+
\log\operatorname{per}_B(A;P).
\]
On the other hand, the hypothesis gives
\[
\log \operatorname{per}(A)
\ge
n\log\sqrt2-
\varepsilon_{\rm tr}n+
\log\operatorname{per}_B(A).
\]
Combining the previous two displays proves the first item.

For the second, set $\overline P=(P+P^*)/2$. By concavity of $X\mapsto \log\operatorname{per}_B(A;X)$ on the matching polytope,
\[
\log\operatorname{per}_B(A;\overline P)
\ge
\frac12\log\operatorname{per}_B(A;P)
+
\frac12\log\operatorname{per}_B(A;P^*).
\]
On the other hand, by optimality of $P^*$,
\[
\log\operatorname{per}_B(A;\overline P)
\le
\log\operatorname{per}_B(A;P^*).
\]
Thus
\[
\begin{aligned}
0
&\le
\log\operatorname{per}_B(A;\overline P)
-
\frac12\log\operatorname{per}_B(A;P)
-
\frac12\log\operatorname{per}_B(A;P^*) \\
&\le
\frac12
\left(
\log\operatorname{per}_B(A;P^*)-
\log\operatorname{per}_B(A;P)
\right)
\le
\frac{\varepsilon_{\rm tr}}{2}n.
\end{aligned}
\]
The linear energy term $\mathcal E_A$ cancels from the Jensen gap, so the left-hand side equals
\[
\sum_{\ell\in L}
\left[
\psi\!\left(\frac{P_{\ell,\cdot}+P^*_{\ell,\cdot}}{2}\right)
-
\frac12 \psi(P_{\ell,\cdot})
-
\frac12 \psi(P^*_{\ell,\cdot})
\right].
\]
This proves the claim.
\end{proof}

\begin{claim}
\label{claim:row-leakage-jensen-gap}
For every $\alpha>0$, there exist constants $\rho_{\rm row}=\rho_{\rm row}(\alpha)>0$ and $\kappa=\kappa(\alpha)>0$ such that the following holds.

Let $p,q\in\Delta_d$. Suppose that $p$ is $\rho_{\rm row}$-tight on coordinates $1$ and $2$, i.e.
\[
\left|p_1-\frac12\right|\le \rho_{\rm row},
\qquad
\left|p_2-\frac12\right|\le \rho_{\rm row},
\qquad
\sum_{i\ge3}p_i\le \rho_{\rm row},
\]
and suppose that $q$ satisfies $\sum_{i\ge3}q_i\ge \alpha$. Then
\[
\psi_d\!\left(\frac{p+q}{2}\right)
-
\frac12\psi_d(p)
-
\frac12\psi_d(q)
\ge \kappa .
\]
\end{claim}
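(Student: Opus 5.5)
First reduce to dimension three using Claim~\ref{clm:merging-does-not-increase-gap}. Apply the merging map $T$ to both $p$ and $q$, so the Jensen gap of $\psi_d$ at $(p,q)$ is at least the Jensen gap of $\psi_3$ at $(Tp,Tq)$. Write $\bar p=Tp=(p_1,p_2,p_3')$ and $\bar q=Tq=(q_1,q_2,q_3')$. The hypotheses pass through $T$ unchanged: $|p_1-\tfrac12|,|p_2-\tfrac12|\le\rho_{\rm row}$ and $p_3'\le\rho_{\rm row}$, while $q_3'\ge\alpha$. It therefore suffices to prove the statement for $d=3$, where everything lives on the compact triangle $\Delta_3$.

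Next, run a compactness argument on $\Delta_3$. Consider the continuous function
\[
J(x,y):=\psi_3\!\left(\tfrac{x+y}{2}\right)-\tfrac12\psi_3(x)-\tfrac12\psi_3(y)\ge0
\]
on $\Delta_3\times\Delta_3$. Set $e:=(\tfrac12,\tfrac12,0)$ and $K_\alpha:=\{y\in\Delta_3: y_3\ge\alpha\}$.

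The key step is to show $J(e,y)>0$ for every $y\in K_\alpha$. Then, by compactness of $K_\alpha$, we get $\kappa_0:=\min_{y\in K_\alpha}J(e,y)>0$. By uniform continuity of $J$ on $\Delta_3\times\Delta_3$, we can pick $\rho_{\rm row}$ so that $\|x-e\|_1\le 3\rho_{\rm row}$ implies $J(x,y)\ge\kappa_0/2$ for all $y\in K_\alpha$. The hypotheses give $\|\bar p-e\|_1\le 3\rho_{\rm row}$, so we may take $\kappa:=\kappa_0/2$.

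The main obstacle is strict positivity $J(e,y)>0$, i.e.\ strict concavity of $\psi_3$ along the segment from $e$ to $y$ whenever $y\ne e$ (and $y\in K_\alpha$ forces $y\ne e$). The concavity of Lemma~\ref{clm:psi-concave-simplex} alone only gives $J\ge0$. I would argue as follows. $J(e,y)=0$ with $J\ge 0$ would force $\psi_3$ to be affine on the segment $[e,y]$, since concavity plus equality at the midpoint gives affinity on the whole segment. Because $e_3=0$ and $y_3\ge\alpha>0$, the segment has a nontrivial third-coordinate direction. Compute the second derivative of $t\mapsto\psi_3(e+t(y-e))$ near $t=0$. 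The third coordinate behaves like $x_3=tv_3$ with $v_3>0$. The terms $-x_3\log x_3$ have second derivative $-v_3^2/x_3\to-\infty$. The remaining contributions are bounded near $t=0$ because $e_1,e_2=\tfrac12$ lie in the interior of $(0,1)$ and $\phi$ is smooth there; this includes $(1-x_3)\log(1-x_3)$, which is smooth near $x_3=0$. So the second derivative is strictly negative for small $t>0$, contradicting affinity.

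If this direct computation is messy, a fallback is to expand $J(e,y)$ directly. The gap in the third coordinate alone is
\[
\phi(y_3/2)-\tfrac12\phi(y_3)-\tfrac12\phi(0)=\tfrac{y_3}{2}\log 2+O(y_3^2\text{-type terms}),
\]
which comes from the entropy term $-x\log x$. Note, however, that the coordinates cannot be treated separately, since $\phi$ itself is not concave. So I would prefer the second-derivative argument along the segment.
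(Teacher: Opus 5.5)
Your proposal is correct and follows the paper's proof essentially step for step: reduce to $\Delta_3$ via the merging map $T$, show $J(e,y)>0$ on $K_\alpha$ by ruling out affinity of the concave function $t\mapsto\psi_3(e+t(y-e))$, and then use compactness and continuity to obtain $\kappa=\kappa_0/2$ and $\rho_{\rm row}$. The only difference is cosmetic: the paper rules out affinity through the infinite right derivative at $t=0$ coming from the term $\phi(tc)$, whereas you use $f''(t)\to-\infty$ as $t\downarrow 0$; both arguments are valid.
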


\begin{proof}
Define $p^*=(1/2,1/2,0)$. Since $p$ is $\rho_{\rm row}$-tight,
\[
\|Tp-p^*\|_1
\le
\left|p_1-\frac12\right|
+
\left|p_2-\frac12\right|
+
\sum_{i\ge3}p_i
\le 3\rho_{\rm row}.
\]
For $u,v\in\Delta_3$, define
\[
J(u,v)
:=
\psi_3\!\left(\frac{u+v}{2}\right)
-
\frac12\psi_3(u)
-
\frac12\psi_3(v).
\]
We show that $J(p^*,v)$ is bounded below by a positive constant whenever $v_3\ge\alpha$. Let $v=(a,b,c)\in\Delta_3$ with $c\ge\alpha$, and define
\[
    f(t):=\psi_3((1-t)p^*+tv),
    \qquad t\in[0,1].
\]
Then
\[
    f(t)
    =
    \phi\!\left(\frac{1-t}{2}+ta\right)
    +
    \phi\!\left(\frac{1-t}{2}+tb\right)
    +
    \phi(tc).
\]
By Lemma \ref{clm:psi-concave-simplex}, $f$ is concave on $[0,1]$. It is not affine, because $\phi(0)=0$ and, as $x\downarrow0$,
\[
    \frac{\phi(x)-\phi(0)}{x}
    \longrightarrow +\infty.
\]
Since $c\ge\alpha>0$, the third term $\phi(tc)$ gives $\lim_{t\downarrow0}(f(t)-f(0))/t=+\infty$. Therefore
\[
    f\!\left(\frac12\right)>
    \frac12f(0)+\frac12f(1).
\]
Equivalently, $J(p^*,v)>0$ whenever $v_3\ge\alpha$. The set $K_\alpha:=\{v\in\Delta_3:v_3\ge\alpha\}$ is compact, and $v\mapsto J(p^*,v)$ is continuous. Hence
\[
\kappa_0(\alpha):=
\min_{v\in K_\alpha}J(p^*,v)>0.
\]
By continuity of $J$ on $\Delta_3\times K_\alpha$, after decreasing $\rho_{\rm row}(\alpha)>0$ if necessary, we may ensure that whenever $\|u-p^*\|_1\le 3\rho_{\rm row}$ and $v\in K_\alpha$, one has $J(u,v)\ge \kappa_0(\alpha)/2$. Set $\kappa(\alpha):=\kappa_0(\alpha)/2$. Now take $u=Tp$ and $v=Tq$. Since $p$ is $\rho_{\rm row}$-tight and $\sum_{i\ge3}q_i\ge\alpha$, the result follows from Claim \ref{clm:merging-does-not-increase-gap}.
\end{proof}

The next claim says that a true-tight $C_4$ has nearly balanced orientation products.

\begin{claim}
\label{clm:true_tight_balanced_weights}
There is an absolute constant $C>0$ such that the following holds. Let $Q=(\{s_1,s_2\},\{r_1,r_2\})$ be a true $\rho$-tight $C_4$-component for the Gibbs measure of $A$, with $\rho<1/20$. Then
\[
    \frac{\min\{A_{s_1r_1}A_{s_2r_2},A_{s_1r_2}A_{s_2r_1}\}}
    {\max\{A_{s_1r_1}A_{s_2r_2},A_{s_1r_2}A_{s_2r_1}\}}
    \ge
    1-C\rho.
\]
In particular $s_Q\ge 2-C\rho$.
\end{claim}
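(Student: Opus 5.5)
The plan is to compare the Gibbs probabilities of the two closed configurations of the block directly, using a weight-preserving flip bijection. Write $U=A_{s_1r_1}A_{s_2r_2}$ and $V=A_{s_1r_2}A_{s_2r_1}$, and let $M\sim\mu_A$. Define two events: $E_U$ is the event that $M$ contains both $(s_1,r_1)$ and $(s_2,r_2)$, and $E_V$ is the event that $M$ contains both $(s_1,r_2)$ and $(s_2,r_1)$. All four entries are positive, since the edges lie in $P_\rho$ and hence have positive marginal. So replacing the parallel pair by the cross pair is a bijection from matchings in $E_U$ to matchings in $E_V$, and it leaves all other edges fixed. This flip multiplies the weight by exactly $V/U$. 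Summing over the bijection gives the exact identity
\[
\frac{\mu_A(E_V)}{\mu_A(E_U)}=\frac{V}{U}.
\]
So it suffices to show that $\mu_A(E_U)$ and $\mu_A(E_V)$ both lie in $[\tfrac12-3\rho,\tfrac12+\rho]$.

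The upper bound is immediate: $\mu_A(E_U)\le P_{s_1r_1}\le\tfrac12+\rho$, and similarly $\mu_A(E_V)\le P_{s_1r_2}\le \tfrac12+\rho$. For the lower bound, I decompose the event $\{(s_1,r_1)\in M\}$. Suppose $(s_1,r_1)\in M$ but $E_U$ fails. Then $s_2$ cannot be matched to $r_1$, which is taken, and it is not matched to $r_2$. Hence $s_2$ is matched to some $r\notin\{r_1,r_2\}$. By the leakage bound in Definition~\ref{def:true-rho-tight-c4}, this has probability at most $\sum_{r\notin\{r_1,r_2\}}P_{s_2r}\le 2\rho$. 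Therefore
\[
\mu_A(E_U)\ge P_{s_1r_1}-2\rho\ge \tfrac12-3\rho,
\]
and the same argument applied to the edge $(s_1,r_2)$ gives the bound for $E_V$.

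Combining these bounds, whichever of $U,V$ is smaller satisfies
\[
\frac{\min\{U,V\}}{\max\{U,V\}}\ge \frac{1/2-3\rho}{1/2+\rho}=\frac{1-6\rho}{1+2\rho}\ge 1-8\rho.
\]
The condition $\rho<1/20$ keeps the numerator positive and gives comfortable room, so one can take $C=8$. The consequence $s_Q=1+\theta_Q\ge 2-C\rho$ then follows directly from the definition of $s_Q$. The only step needing care is the lower bound on $\mu_A(E_U)$: one must note that $s_2$ cannot use $r_1$ once $s_1$ does. After that, the whole argument reduces to the flip identity and the leakage bounds recorded in Definition~\ref{def:true-rho-tight-c4}.
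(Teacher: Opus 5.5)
Your proof is correct and takes essentially the same approach as the paper: both place the probabilities of the two internally closed configurations in $[\tfrac12-3\rho,\tfrac12+\rho]$ via the leakage bounds, then use the flip (the paper's ``identical residual outside problem'') to identify their ratio with $U/V$. You charge the failure event to row $s_2$ leaking instead of column $r_2$, which is an equivalent variant, and you make the constant explicit with the verified bound $\frac{1-6\rho}{1+2\rho}\ge 1-8\rho$, giving $C=8$.
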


\begin{proof}
Let
\[
D:=\Pr_{\mu_A}[M(s_1)=r_1,\ M(s_2)=r_2]
\]
and
\[
X:=\Pr_{\mu_A}[M(s_1)=r_2,\ M(s_2)=r_1].
\]
Because $Q$ is true $\rho$-tight, each internal edge marginal is within $\rho$ of $1/2$, and each row and column leakage from $Q$ is at most $2\rho$. If the event $M(s_1)=r_1$ occurs but $M(s_2)\ne r_2$, then column $r_2$ is matched from outside $\{s_1,s_2\}$, an event of probability at most $2\rho$. Therefore
\[
    D\ge P_{s_1r_1}-2\rho\ge \frac12-3\rho.
\]
Also $D\le P_{s_1r_1}\le 1/2+\rho$. The same argument gives
\[
    \frac12-3\rho\le X\le \frac12+\rho.
\]

Conditioned on $Q$ being internally closed, the residual outside problem is identical for the parallel and cross orientations. Hence
\[
    \frac{D}{X}
    =
    \frac{A_{s_1r_1}A_{s_2r_2}}{A_{s_1r_2}A_{s_2r_1}}.
\]
Since both $D$ and $X$ lie in $[1/2-3\rho,1/2+\rho]$, their ratio is between $1-C\rho$ and $(1-C\rho)^{-1}$ for an absolute constant $C$, proving the claim.
\end{proof}

With the above claims in hand, we now prove Lemma \ref{lem:true-c4-implies-central-bethe-c4}.

\begin{proof}[Proof of Lemma \ref{lem:true-c4-implies-central-bethe-c4}]
Fix $0<\lambda<1/10$ and $\zeta>0$. Choose $    \alpha:=\frac{\lambda}{4}.$ Let $\rho_{\rm row}=\rho_{\rm row}(\alpha)$ and $\kappa=\kappa(\alpha)$ be the constants from Claim \ref{claim:row-leakage-jensen-gap}. Choose $\rho>0$ small enough so that
\[
    \rho\le \frac12\rho_{\rm row},
    \qquad
    \rho<\frac1{20},
    \qquad
    C\rho\le \lambda,
\]
where $C$ is the constant from Claim \ref{clm:true_tight_balanced_weights}. Lastly, define $ \varepsilon_{\rm tr}:=\kappa\zeta.$ Let $\mathcal Q$ be a row-disjoint collection of true $\rho$-tight $C_4$'s for the true marginal matrix $P$, and suppose
\[
\log\frac{\operatorname{per}(A)}{\operatorname{per}_B(A)}
\ge
n\log\sqrt2-
\varepsilon_{\rm tr}n.
\]
Call a row $s_i$ bad if it belongs to some $Q=\{s_1,s_2,r_1,r_2\}\in\mathcal Q$ and satisfies
\[
    \sum_{r\notin\{r_1,r_2\}}P^*_{s_i r}\ge \alpha.
\]
We show that there are few bad rows. Fix a bad row $s_i$ belonging to such a $Q\in\mathcal Q$. Since $Q$ is a true $\rho$-tight $C_4$-component for $P$, the two internal true marginals in row $s_i$ are within $\rho$ of $1/2$, and
\[
    \sum_{r\notin\{r_1,r_2\}}P_{s_i r}
    \le
    2\rho
    \le
    \rho_{\rm row}.
\]
Also $\rho\le\rho_{\rm row}$, so after relabeling $r_1,r_2$ as coordinates $1,2$, the row $P_{s_i,\cdot}$ is $\rho_{\rm row}$-tight in the sense of Claim \ref{claim:row-leakage-jensen-gap}. Since $s_i$ is bad,
\[
    \sum_{r\notin\{r_1,r_2\}}P^*_{s_i r}
    \ge
    \alpha.
\]
Therefore Claim \ref{claim:row-leakage-jensen-gap} implies that this row contributes at least $\kappa$ to the Jensen gap:
\[
\psi\!\left(\frac{P_{s_i,\cdot}+P^*_{s_i,\cdot}}2\right)
-
\frac12\psi(P_{s_i,\cdot})
-
\frac12\psi(P^*_{s_i,\cdot})
\ge \kappa.
\]
By Lemma \ref{clm:psi-concave-simplex}, every row Jensen gap is nonnegative. On the other hand, by Claim \ref{clm:true-marginals-near-bethe-quantitative}, the total row Jensen gap is at most $\varepsilon_{\rm tr}n/2$. Thus the number of bad rows is at most
\[
    \frac{\varepsilon_{\rm tr}}{2\kappa}n
    =
    \frac{\zeta}{2}n.
\]
Since $\mathcal Q$ is row-disjoint, the number of blocks in $\mathcal Q$ containing at least one bad row is also at most $\zeta n/2$. In particular, all but at most $\zeta n/2$ blocks in $\mathcal Q$ have both Bethe row leakages less than $\alpha$.

Now fix such a good block $Q=\{s_1,s_2,r_1,r_2\}\in\mathcal Q$. Then
\[
    \sum_{r\notin\{r_1,r_2\}}P^*_{s_i r}<\alpha
    \qquad
    (i=1,2).
\]
We next show that the two Bethe column leakages are also small. Since $P^*$ is feasible, all row and column sums equal $1$. Therefore
\[
\begin{aligned}
\sum_{s\notin\{s_1,s_2\}}P^*_{s r_1}
+
\sum_{s\notin\{s_1,s_2\}}P^*_{s r_2}
&=
2-
\sum_{i=1}^2\sum_{j=1}^2P^*_{s_i r_j} \\
&=
\sum_{i=1}^2
\left(1-
\sum_{j=1}^2P^*_{s_i r_j}
\right) \\
&=
\sum_{i=1}^2\sum_{r\notin\{r_1,r_2\}}P^*_{s_i r}
<2\alpha.
\end{aligned}
\]
Thus the total row leakage is less than $2\alpha$ and the total column leakage is less than $2\alpha$. Hence
\[
    \operatorname{leak}_{P^*}(Q)<4\alpha=\lambda.
\]
So all but at most $\zeta n$ members of $\mathcal Q$ are $\lambda$-admissible. Finally, by Claim \ref{clm:true_tight_balanced_weights} and the choice of $\rho$, every member $Q\in\mathcal Q$ satisfies $s_Q\ge2-\lambda$. This proves the lemma.
\end{proof}

\end{document}